\newcommand{\declarecolor}[2]{\definecolor{#1}{RGB}{#2}\expandafter\newcommand\csname #1\endcsname[1]{\textcolor{#1}{##1}}}
\DeclareMathOperator*{\defeq}{\overset{def}{=}}
\newcommand{\mat}[1]{\mathbf{#1}}
\newcommand{\tensor}[1]{\bm{\mathscr{#1}}}
\newcommand{\frobenius}{\textnormal{F}}
\newcommand{\STHOSVD}{\textnormal{ST-HOSVD}}
\newcommand{\HOSVD}{\textnormal{HOSVD}}
\newcommand{\HOOI}{\textnormal{HOOI}}
\newcommand{\R}{\mathbb{R}}
\renewcommand{\epsilon}{\varepsilon}
\DeclarePairedDelimiter{\inner}{\langle}{\rangle}
\DeclarePairedDelimiter{\set}{\{}{\}}
\DeclarePairedDelimiter{\parens}{(}{)}
\DeclarePairedDelimiter{\norm}{\lVert}{\rVert}
\theoremstyle{plain}
\newtheorem{theorem}{Theorem}[section]
\newtheorem{lemma}[theorem]{Lemma}
\theoremstyle{definition}
\newtheorem{definition}[theorem]{Definition}
\newtheorem{remark}[theorem]{Remark}
\newtheorem{fact}[theorem]{Fact}
\title{A Tight Lower Bound for the Approximation Guarantee of Higher-Order Singular Value Decomposition}
\author[1]{Matthew Fahrbach}
\author[2]{Mehrdad Ghadiri}
\affil[1]{Google Research, \texttt{fahrbach@google.com}}
\affil[1]{Massachusetts Institute of Technology, \texttt{mehrdadg@mit.edu}}
\begin{document}

\date{}
\maketitle

\begin{abstract}
We prove that the classic approximation guarantee for the higher-order singular value decomposition (HOSVD) is tight
by constructing a tensor for which HOSVD achieves an approximation ratio of $N/(1+\varepsilon)$, for any $\varepsilon > 0$.
This matches the upper bound of \citet{de2000multilinear}
and shows that the approximation ratio of HOSVD cannot be improved.
Using a more advanced construction, we also prove that
the approximation guarantees for
the ST-HOSVD algorithm of \citet{vannieuwenhoven2012new}
and higher-order orthogonal iteration (HOOI) of \citet{de2000zest}
are tight by showing that they can achieve their worst-case approximation ratio of
$N / (1 + \varepsilon)$, for any $\varepsilon > 0$.
\end{abstract}

\tableofcontents

\newpage
\section{Introduction}
\label{sec:introduction}

Tensor decomposition, the higher-order generalization of matrix factorization, is a powerful framework for analyzing multiway data, with countless applications in data mining, machine learning, and signal processing~\citep{kolda2009tensor,rabanser2017introduction,sidiropoulos2017tensor,jang2021fast}.
The two most widely used decompositions are the CP and Tucker decompositions.
A fundamental difference between matrices and tensors is that computing the rank of a tensor is NP-hard~\citep{hillar2013most}.
Consequently, tensor decomposition algorithms often require the rank to be specified in advance, after which they iteratively optimize the low-rank factors to best fit the data.
Despite the simplicity of this approach, it is highly effective in practice across many domains.

Given a tensor $\tensor{X} \in \R^{I_1 \times I_2 \times \dots \times I_N}$
and multilinear rank $\mat{r} = (R_1,R_2,\dots,R_n)$,
the standard objective for fitting a rank-$\mat{r}$ Tucker decomposition
is to minimize the least-squares reconstruction error
\begin{align*}
    L(\tensor{X},\mat{r})
    = &
    \min_{\tensor{G}, \mat{A}^{(1)}, \dots, \mat{A}^{(N)}}
    \norm{\tensor{X} - \tensor{G} \times_1 \mat{A}^{(1)} \times_2 \dots \times_N \mat{A}^{(N)}}_{\frobenius}^2
    \\ & \text{subject to }
    ~\tensor{G} \in \R^{R_1 \times R_2 \times \dots \times R_N},
    ~\mat{A}^{(n)} \in \R^{I_n \times R_n},
    ~\forall n \in [N].
\end{align*}
where $\times_{n}$ is the mode-$n$ product (defined in \Cref{subsec:tensor_products}).
Although this problem is highly nonconvex, the celebrated
\emph{higher-order singular value decomposition} (HOSVD) of \citet{de2000multilinear}
gives a Tucker decomposition with provable approximation guarantees.
To illustrate its importance, HOSVD is the first Tucker decomposition algorithm listed in the MATLAB Tensor Toolbox~\citep{matlab}.

The HOSVD computes a Tucker decomposition of a tensor $\tensor{X}$ via a simple process:
For each mode $n$, it computes the SVD of the mode-$n$ unfolding $\mat{X}_{(n)} \in \R^{I_n \times I_1 \cdots I_{n-1} I_{n+1} \cdots I_N}$ and sets the factor matrix $\mat{A}^{(n)} \in \R^{I_n \times R_N}$ to be the top-$R_n$ left singular vectors.
Then it sets the core tensor to be
$\tensor{G} \gets \tensor{X} \times_{1} {\mat{A}^{(1)}}^\intercal \times_{2} {\mat{A}^{(2)}}^\intercal \times_{3} \dots \times_{N} {\mat{A}^{(N)}}^\intercal$,
which is optimal given the fixed factors.
HOSVD comes with a strong approximation guarantee for its reconstruction error,
shown in \citet[Property 10]{de2000multilinear}
and \citet[Theorem 10.2]{hackbusch2019tensor}.

\begin{theorem}[HOSVD upper bound, informal]
\label{thm:hosvd_intro}
For any tensor $\tensor{X} \in \R^{I_1 \times I_2 \times \dots \times I_N}$
and rank $\mat{r} = (R_1,R_2,\dots,R_N)$,
\[
    \norm{\tensor{X} - \widehat{\tensor{X}}_{\HOSVD(\mat{r})}}_{\frobenius}^2
    %\le
    %\sum_{n=1}^N \sum_{i_n = R_n + 1}^{I_n} \parens*{\sigma_{i_n}^{(n)}}^2
    \le
    N \cdot L(\tensor{X}, \mat{r}).
\]
\end{theorem}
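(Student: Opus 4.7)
The plan is to decompose the HOSVD reconstruction error via a telescoping identity over the $N$ modes, use a Pythagorean argument to rewrite the squared error as a sum of $N$ mode-wise terms, and finally bound each such term by $L(\tensor{X}, \mat{r})$ using Eckart--Young mode by mode.

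First, let $\mat{P}^{(n)} = \mat{A}^{(n)} {\mat{A}^{(n)}}^\intercal$ be the orthogonal projector onto the column span of the $n$-th HOSVD factor. Because the optimal core for fixed factors is $\tensor{G} = \tensor{X} \times_1 {\mat{A}^{(1)}}^\intercal \times_2 \cdots \times_N {\mat{A}^{(N)}}^\intercal$, the HOSVD output satisfies $\widehat{\tensor{X}}_{\HOSVD(\mat{r})} = \tensor{X} \times_1 \mat{P}^{(1)} \times_2 \cdots \times_N \mat{P}^{(N)}$. Introducing the partial projections $\tensor{Y}_n = \tensor{X} \times_1 \mat{P}^{(1)} \times_2 \cdots \times_n \mat{P}^{(n)}$, with $\tensor{Y}_0 = \tensor{X}$ and $\tensor{Y}_N = \widehat{\tensor{X}}_{\HOSVD(\mat{r})}$, I would telescope
\[
    \tensor{X} - \widehat{\tensor{X}}_{\HOSVD(\mat{r})} = \sum_{n=1}^{N} \bigl(\tensor{Y}_{n-1} - \tensor{Y}_n\bigr) = \sum_{n=1}^{N} \tensor{Y}_{n-1} \times_n (\mat{I} - \mat{P}^{(n)}).
\]

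The next step is to verify that these $N$ summands are mutually orthogonal in the Frobenius inner product. For $n < m$, the mode-$n$ unfolding of the $n$-th summand has column space contained in the range of $\mat{I} - \mat{P}^{(n)}$, whereas the mode-$n$ unfolding of the $m$-th summand has column space contained in the range of $\mat{P}^{(n)}$ (since mode products in distinct modes commute and $\mat{P}^{(n)}$ has already been applied to $\tensor{Y}_{m-1}$ along mode $n$). These two subspaces are orthogonal, so the Frobenius inner product vanishes and Pythagoras yields
\[
    \norm{\tensor{X} - \widehat{\tensor{X}}_{\HOSVD(\mat{r})}}_{\frobenius}^2 = \sum_{n=1}^{N} \norm{\tensor{Y}_{n-1} \times_n (\mat{I} - \mat{P}^{(n)})}_{\frobenius}^2.
\]
Because orthogonal projections applied in modes other than $n$ are non-expansive in the Frobenius norm, each summand is at most $\norm{\tensor{X} \times_n (\mat{I} - \mat{P}^{(n)})}_{\frobenius}^2 = \norm{\mat{X}_{(n)} - \mat{P}^{(n)} \mat{X}_{(n)}}_{\frobenius}^2$.

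Finally, I would lift each mode-$n$ quantity to the Tucker optimum. By construction $\mat{A}^{(n)}$ consists of the top-$R_n$ left singular vectors of $\mat{X}_{(n)}$, so Eckart--Young guarantees that $\norm{\mat{X}_{(n)} - \mat{P}^{(n)} \mat{X}_{(n)}}_{\frobenius}^2$ is the smallest reconstruction error attainable among rank-$R_n$ approximations of $\mat{X}_{(n)}$. For any feasible Tucker decomposition $(\tensor{G}^\star, \mat{A}^{(1)\star}, \dots, \mat{A}^{(N)\star})$, the $n$-th unfolding of the reconstructed tensor has column rank at most $R_n$, so its error is at least the best rank-$R_n$ error; hence each mode-wise term is at most $L(\tensor{X}, \mat{r})$. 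Summing over $n$ gives the factor $N$. The main obstacle I expect is rigorously justifying the orthogonality of the telescoping summands: it needs careful bookkeeping of which mode-$n$ subspace each term inhabits, using commutativity of mode products in distinct modes together with the idempotence and orthogonality of $\mat{P}^{(n)}$ versus $\mat{I} - \mat{P}^{(n)}$. Everything else reduces to non-expansivity of projections and a mode-wise application of Eckart--Young.
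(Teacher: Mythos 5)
Your argument is correct, and it is exactly the classical proof of this bound: the paper does not prove \Cref{thm:hosvd_intro} itself but defers to \citet[Property 10]{de2000multilinear} and \citet[Theorem 10.2]{hackbusch2019tensor}, and your telescoping decomposition of the residual into $N$ mutually orthogonal mode-wise terms, followed by non-expansiveness of the remaining projections and a per-mode Eckart--Young comparison against the unfolding of the optimal Tucker reconstruction (which has rank at most $R_n$ in mode $n$), is precisely the argument given in those references. The orthogonality bookkeeping you flag as the main obstacle goes through exactly as you sketch it, so there is no gap.
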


In contrast to low-rank matrix approximation, where the Eckart--Young--Mirsky theorem guarantees the optimality of the truncated SVD, HOSVD does not necessarily give the best Tucker decomposition subject to the rank constraints.
For instance, \citet{kolda2009tensor} state
\emph{``the truncated HOSVD is not optimal in terms of giving the best fit as measured by the norm of the difference, but it is a good starting point for an iterative ALS algorithm.''}

Alternating least squares (ALS) methods are widely used for refining a Tucker decomposition because the objective function simplifies to an ordinary least-squares regression problem when all but one component are held fixed, e.g., a single factor matrix $\mat{A}^{(1)}$.
The most popular example is the \emph{higher-order orthgonal iteration} (HOOI) algorithm of \cite{de2000zest},
which initializes its solution with HOSVD and iteratively updates the factor matrices to monotonically decrease the reconstruction error.

Despite the long history and practical success of HOSVD and HOOI, their theoretical properties remain surprisingly elusive.
The motivation for our work is well-summarized by \citet{zhang2018tensor}:
\begin{quote}
    \emph{``HOSVD and HOOI have been widely studied in the literature. However as far as we know, many theoretical properties of these procedures, such as the error bound and the necessary iteration times, still remain unclear.''}
\end{quote}

\subsection{Our Contribution}

We show that all known methods for computing a rank-$\mat{r}$ Tucker decomposition
that have an approximation guarantee
have a worst-case approximation ratio of at least $N/(1+\epsilon)$, for any $\epsilon > 0$.
To start, we prove that the upper bound on the approximation ratio of HOSVD given in \cite{de2000multilinear} is tight.

\begin{restatable}[HOSVD lower bound]{theorem}{HosvdLowerBound}
\label{thm:hosvd_lower_bound}
For any $N \ge 2$ and $\varepsilon > 0$,
there is a tensor $\tensor{X}$ and multilinear rank $\mat{r}$ such that
\[
  \norm{\tensor{X} - \widehat{\tensor{X}}_{\HOSVD(\mat{r})}}_{\frobenius}^2
  \ge
  \frac{N}{1 + \varepsilon} \cdot L(\tensor{X}, \mat{r}).
\]
\end{restatable}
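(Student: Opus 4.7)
The plan is to construct an explicit parametric family $\{\tensor{X}_\delta\}_{\delta > 0}$ of tensors, together with a fixed multilinear rank $\mat{r}$, such that the HOSVD-to-$L$ ratio for $\tensor{X}_\delta$ tends to $N$ as $\delta \to 0^+$; then, given $\varepsilon > 0$, I pick $\delta = \delta(\varepsilon)$ small enough. I would work in the smallest nontrivial regime, namely $I_1 = \cdots = I_N = 2$ and $\mat{r} = (1,1,\dots,1)$, and design $\tensor{X}_\delta$ to be cyclically symmetric across its $N$ modes so that every mode-$n$ unfolding shares the same spectrum and top singular subspace (up to relabeling). The parameter $\delta$ plays the role of a tiebreaker: the underlying ``degenerate'' tensor $\tensor{X}_0$ has tied singular values in each mode, and the perturbation of size $\delta$ both makes the top singular vector unique and steers HOSVD toward a direction that is globally poor for the Tucker objective.

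Given the construction, the argument has three computational steps. (i) Compute the mode-$n$ unfolding $\mat{X}_{(n)}$ of $\tensor{X}_\delta$ and diagonalize $\mat{X}_{(n)} \mat{X}_{(n)}^\top$ to read off the unique top left singular vector $\mat{u}_n$, which by symmetry is the same direction relative to each mode. (ii) Evaluate the inner product $\langle \tensor{X}_\delta, \mat{u}_1 \otimes \cdots \otimes \mat{u}_N \rangle$ and show that its square is at most $O(\delta) \cdot \norm{\tensor{X}_\delta}_{\frobenius}^2$; by Pythagoras,
\[
\norm{\tensor{X}_\delta - \widehat{\tensor{X}}_{\HOSVD(\mat{r})}}_{\frobenius}^2 = \norm{\tensor{X}_\delta}_{\frobenius}^2 - \langle \tensor{X}_\delta, \mat{u}_1 \otimes \cdots \otimes \mat{u}_N \rangle^2 = (1 - O(\delta)) \, \norm{\tensor{X}_\delta}_{\frobenius}^2.
\]
(iii) Exhibit a specific (not necessarily optimal) rank-$\mat{r}$ Tucker decomposition $\widetilde{\tensor{X}}_\delta = \alpha \, \mat{v}_1 \otimes \cdots \otimes \mat{v}_N$ whose factor directions are substantially different from the $\mat{u}_n$'s, and verify
\[
\norm{\tensor{X}_\delta - \widetilde{\tensor{X}}_\delta}_{\frobenius}^2 \le \left(\tfrac{1}{N} + O(\delta)\right) \norm{\tensor{X}_\delta}_{\frobenius}^2,
\]
which upper-bounds $L(\tensor{X}_\delta, \mat{r})$. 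Taking the ratio gives
\[
\frac{\norm{\tensor{X}_\delta - \widehat{\tensor{X}}_{\HOSVD(\mat{r})}}_{\frobenius}^2}{L(\tensor{X}_\delta, \mat{r})} \ge \frac{1 - O(\delta)}{1/N + O(\delta)} = \frac{N}{1 + O(\delta)},
\]
and choosing $\delta$ small enough in terms of $\varepsilon$ yields the desired bound of at least $N/(1+\varepsilon)$.

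The main obstacle is the construction itself. Naive recipes---a rank-$1$ signal plus a perpendicular rank-$1$ noise, or a ``W-state''-style tensor whose only nonzero entries are those with exactly one mode-index equal to $2$---produce HOSVD ratios bounded by small constants (at most around $2$), because the mode-$n$ SVD is very effective at detecting the direction of largest mode-$n$ variance. Pushing the ratio all the way to $N$ requires the tensor to be close to a single rank-$1$ tensor (so that $L$ is small), while simultaneously arranging that the mode-$n$ SVDs in \emph{all} $N$ modes are steered toward directions whose joint product $\mat{u}_1 \otimes \cdots \otimes \mat{u}_N$ is nearly orthogonal to $\tensor{X}_\delta$ (so HOSVD loses essentially all the energy). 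Designing a family of $\tensor{X}_\delta$ that reconciles these two competing demands, and then verifying by explicit eigenvalue and inner-product computations that every mode's top singular vector indeed lies along the ``wrong'' direction, is where the heart of the argument lies.
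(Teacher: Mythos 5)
There is a genuine gap: you never construct the tensor. Your proposal is a proof \emph{template} (perturb a degenerate instance, compute the unfoldings, bound the HOSVD inner product, exhibit a competing rank-$\mat{r}$ decomposition), and you explicitly acknowledge that the adversarial family $\{\tensor{X}_\delta\}$ reconciling the ``two competing demands'' is the heart of the matter and is not supplied. Since the entire content of the theorem is that such a tensor exists, what you have written does not yet constitute a proof.

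Worse, the regime you commit to---$I_1=\cdots=I_N=2$ with $\mat{r}=(1,1,\dots,1)$ and a perturbation that makes the top singular vector of each unfolding \emph{unique}---provably cannot yield the bound. For $N=2$, if the top singular value of $\mat{X}_{(1)}=\mat{X}$ is simple, then HOSVD selects $\mat{u}_1$ in mode $1$ and $\mat{v}_1$ in mode $2$ and reconstructs $\sigma_1\mat{u}_1\mat{v}_1^\intercal$, which is optimal by Eckart--Young; the ratio is $1$, not $2$. For $N\ge3$ the obstruction is quantitative but just as fatal: writing $T=\norm{\tensor{X}}_{\frobenius}^2$ and letting $\lambda_{\max}$ be the best rank-one correlation, we have $L=T-\lambda_{\max}^2$ and, because each mode is only $2$-dimensional, $(\sigma_2^{(n)})^2=T-(\sigma_1^{(n)})^2\le T-\lambda_{\max}^2=L$. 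Combining the paper's upper bound $\norm{\tensor{X}-\widehat{\tensor{X}}_{\HOSVD(\mat{r})}}_{\frobenius}^2\le\sum_n(\sigma_2^{(n)})^2$ with a ratio close to $N$ forces $(\sigma_1^{(n)})^2$ to be within $O(\epsilon)T$ of $\lambda_{\max}^2$ in every mode, while the spectral gap $(\sigma_1^{(n)})^2-(\sigma_2^{(n)})^2$ stays of order $T$; this pins the top left singular vector of each unfolding to the corresponding factor of the best rank-one approximation, so HOSVD captures $\approx\lambda_{\max}^2$ and is near-optimal. In other words, your two desiderata---$L$ small (tensor nearly rank one) and $\langle\tensor{X},\mat{u}_1\otimes\cdots\otimes\mat{u}_N\rangle\approx0$---are mutually exclusive when the modes have dimension $2$. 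This is exactly why your ``naive recipes'' plateau around constant ratios.

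The paper escapes this by moving to $\tensor{X}\in\R^{3\times\cdots\times3}$ with $\mat{r}=(2,\dots,2)$, where the mechanism is different from the one you describe: HOSVD is not made to capture almost nothing, but is made to \emph{discard the wrong basis vector}. The mode-$n$ Gram matrix is diagonal with entries $N-1$, $1+\epsilon$, $1$, so HOSVD keeps $\{\mat{e}_1,\mat{e}_3\}$ (or $\{\mat{e}_1,\mat{e}_2\}$ when $N=2$) and drops $\mat{e}_2$ or $\mat{e}_3$ in every mode; the ``good'' mass of squared norm $N$ sits on index tuples that each contain both a $2$ and a $3$, so dropping either index in any single mode annihilates all of it, while keeping $\{\mat{e}_2,\mat{e}_3\}$ in every mode recovers it exactly and leaves error only $1+\epsilon$. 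No $\delta\to0^+$ limit is needed; $\epsilon$ enters the construction directly. If you want to salvage your outline, you should abandon the $2^{\times N}$, rank-$(1,\dots,1)$ setting and build in a third index per mode so that the rank budget forces a genuine exclusion decision.
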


\noindent
To the best of our knowledge, our lower bound for HOSVD is novel.
We found the following lower bounds for $N = 3$ in the literature:
\begin{itemize}
    \item \citet[Example 5]{de2000multilinear}
    gave a $3 \times 3 \times 3$ tensor and set $\mat{r} = (2,2,2)$ to show that the approximation ratio is at least $(1.0880 / 1.0848)^2 = 1.0059$.
    \item \citet[Section 6.3]{vannieuwenhoven2012new}
    gave a different $3 \times 3 \times 3$ tensor and set $\mat{r}=(2,2,2)$
    to improve the approximation ratio to $(8.8188 / 7.4497)^2 = 1.4013$.
\end{itemize}
Even for $N = 3$, there is a large gap between these lower bounds
and the upper bound in \Cref{thm:hosvd_intro}.

To close this gap, we show that the HOSVD approximation ratio is at least
$N / (1 + \varepsilon)$, for any $N \ge 2$ and $\varepsilon > 0$.
Our proof is constructive: we design an adversarial symmetric tensor $\tensor{X} \in \R^{3 \times 3 \times \dots \times 3}$ for rank $\mat{r} = (2,2,\dots,2)$.
This tensor exploits the greedy nature of HOSVD by forcing it to make the same suboptimal decision in each mode, which gives a reconstruction error of $N$.
We then show that there is a rank-$\mat{r}$ Tucker decomposition
whose reconstruction error is only $1 + \varepsilon$.
The ratio of these two values proves the lower bound.

We then turn to adaptive methods---namely, \emph{sequentially truncated HOSVD} (ST-HOSVD) in \citet{vannieuwenhoven2012new} and HOOI. Analyzing approximation lower bounds for these methods is more challenging due to their adaptive behavior across the modes of the tensor and previous factor matrix decisions. For example, the tensor we construct to prove the approximation lower bound for HOSVD only yields a lower bound of $(N-1)/(1+\epsilon)$ for ST-HOSVD and HOOI.
However, by using a more intricate construction, we establish the same lower bound of $N/(1+\epsilon)$ for these methods as well.

\begin{restatable}[ST-HOSVD lower bound]{theorem}{SthosvdLowerBound}
\label{thm:st-hosvd-lower-bound}
For any $N\geq 3$ and $\epsilon>0$,
there is a tensor $\tensor{X}$ and multilinear rank $\mat{r}$ such that
\[
  \norm{\tensor{X} - \widehat{\tensor{X}}_{\STHOSVD(\mat{r})}}_{\frobenius}^2
  \ge
  \frac{N}{1 + \varepsilon} \cdot L(\tensor{X}, \mat{r}).
\]
\end{restatable}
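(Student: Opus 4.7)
The plan is to generalize the adversarial tensor used for \Cref{thm:hosvd_lower_bound} so that its worst-case structure persists across the \emph{sequential} projections performed by ST-HOSVD. The symmetric $3\times 3\times\cdots\times 3$ construction is not tight for ST-HOSVD because, once it selects a suboptimal mode-$1$ factor and projects $\tensor{X}$ onto it, the tie between the truthful and the adversarial top singular vectors in at least one subsequent mode is broken: the residual tensor $\tensor{Y}_1$ is sufficiently ``cleaned up'' that ST-HOSVD picks the truthful subspace there, losing one mode's contribution and giving only $(N-1)/(1+\varepsilon)$. The corrected construction must build a tensor whose mode-$n$ unfolding of $\tensor{Y}_{n-1}$, \emph{after} the projections by the $n{-}1$ previously chosen (suboptimal) factor matrices, still presents an $\varepsilon$-margin near-tie between an adversarial direction and the truth.

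The natural way to enforce this is an inductive construction: design $\tensor{X}$ so that after ST-HOSVD processes the first mode, the residual $\tensor{Y}_1 = \tensor{X}\times_1 {\mat{A}^{(1)}}^\intercal$ is, up to rescaling and factor relabeling, an adversarial instance of the $(N{-}1)$-mode ST-HOSVD problem. One concrete realization is to add to the symmetric HOSVD-style adversarial tensor a small, carefully oriented ``compensation'' component that is supported along the fiber ST-HOSVD will choose at mode $1$ (so it is preserved by that projection) and that re-creates the required adversarial singular configuration in the mode-$2$ unfolding of $\tensor{Y}_1$; iterating this idea mode by mode produces a layered tensor in which each of the $N$ sequential truncations must discard essentially a unit of squared energy.

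Given such a tensor, the proof reduces to two routine steps analogous to those in the HOSVD lower bound. First, simulate ST-HOSVD with an adversarial mode ordering and verify, mode by mode, that the top-$R_n$ left singular vectors of the mode-$n$ unfolding of $\tensor{Y}_{n-1}$ miss the truthful direction by a margin driven by a single tunable parameter $\varepsilon$; summing the $N$ squared truncation errors yields $\norm{\tensor{X} - \widehat{\tensor{X}}_{\STHOSVD(\mat{r})}}_{\frobenius}^2 \ge N$ after normalization. Second, exhibit an explicit rank-$\mat{r}$ Tucker decomposition using the truthful factors at each mode, with squared error at most $1+\varepsilon$, yielding $L(\tensor{X},\mat{r}) \le 1+\varepsilon$ and the desired ratio. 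The main obstacle---and the reason the construction is necessarily more intricate than the symmetric HOSVD one---is satisfying two competing requirements on each compensation component: it must lie within the suboptimal subspaces ST-HOSVD has already chosen at earlier modes (so that it is not annihilated by their projections), while being small enough that it does not tip the singular-value comparison at its own mode in favor of the truthful subspace. Threading this needle will require a careful choice of the relative scalings of the symmetric core and the compensation gadgets, together with a single $\varepsilon$ tuned small enough to control the optimal error.
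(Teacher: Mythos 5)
Your high-level strategy is the same as the paper's: keep the simple HOSVD gadget's ``$1+\epsilon$ versus $1$'' trap alive at every mode by adding extra mass that survives ST-HOSVD's sequential projections. However, the proposal stops exactly where the actual mathematical content begins. You never exhibit the tensor; you only list properties it should have and assert that ``threading this needle will require a careful choice of the relative scalings.'' The existence of a tensor satisfying your two competing requirements \emph{is} the theorem's proof, and nothing in your argument establishes it. Worse, one of your stated requirements would steer the construction in the wrong direction: you insist the compensation gadget be ``small enough that it does not tip the singular-value comparison at its own mode,'' but in the correct construction the gadget is not small at all (it carries squared norm $N(1+\epsilon)$). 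What actually matters is \emph{where} the gadget lives, not its size: it must be supported in the intersection of the subspace ST-HOSVD retains and the subspace the benchmark decomposition retains, so that it is never truncated by either and never touches the two rows whose comparison ST-HOSVD must get wrong. With mode size $3$ and rank $2$ (your starting point), that intersection is too small to hold such a gadget, which is why the paper enlarges the modes to size $4$ and takes $\mat{r}=(3,3,\dots,3)$.

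Concretely, the paper's construction (\Cref{def:construction-sthosvd}) adds to the top component $\tensor{Y}$ (value $\sqrt{1+\epsilon}$ at $(1,\dots,1)$) and the bottom component $\tensor{Z}$ (unit values at the permutations of $(4,3,\dots,3)$) a \emph{middle component} $\tensor{L}$ with values $\sqrt{1+\epsilon}$ at the permutations of $(3,2,\dots,2)$. The rows of every unfolding then have squared norms $1+\epsilon$, $(N-1)(1+\epsilon)$, $N+\epsilon$, and $1$ for $\mat{e}_1,\mat{e}_2,\mat{e}_3,\mat{e}_4$ respectively, so ST-HOSVD keeps $\{\mat{e}_1,\mat{e}_2,\mat{e}_3\}$; each projection deletes one unit entry of $\tensor{Z}$, decrementing only the $\mat{e}_3$ row norm ($N+\epsilon \to N-1+\epsilon \to \cdots \to 1+\epsilon$), which stays above $1$ throughout, so the decisive comparison at every mode remains $\mat{e}_1$ ($1+\epsilon$) versus $\mat{e}_4$ ($1$) and is lost by the $\epsilon$ margin every time. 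The output is the subtensor on indices $\{1,2,3\}^N$, discarding all of $\tensor{Z}$ for error $N$, while the factor matrices $[\mat{e}_2\;\mat{e}_3\;\mat{e}_4]$ capture $\tensor{Z}$ and $\tensor{L}$ entirely and lose only $\tensor{Y}$, giving $L(\tensor{X},\mat{r}) \le 1+\epsilon$. Your two ``routine steps'' are indeed routine once this tensor is on the table, but without it the proposal is a research plan rather than a proof.
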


\begin{restatable}[HOOI lower bound]{corollary}{HooiLowerBound}
\label{thm:hooi_lower_bound}
For any $N \ge 3$ and $\varepsilon > 0$,
there is a tensor $\tensor{X}$ and multilinear rank $\mat{r}$ such that
\[
  \norm{\tensor{X} - \widehat{\tensor{X}}_{\HOOI(\mat{r})}}_{\frobenius}^2
  \ge
  \frac{N}{1 + \varepsilon} \cdot L(\tensor{X}, \mat{r}).
\]
\end{restatable}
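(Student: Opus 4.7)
The plan is to deduce \Cref{thm:hooi_lower_bound} directly from \Cref{thm:st-hosvd-lower-bound} by arguing that, on the tensor $\tensor{X}$ constructed there, the HOSVD initialization that HOOI begins with is already a fixed point of the ALS updates. Since HOOI is initialized at the HOSVD factors and its alternating updates monotonically decrease the Frobenius reconstruction error, it suffices to show that none of the $N$ per-mode SVD updates changes the current factor matrix. If this holds, then $\widehat{\tensor{X}}_{\HOOI(\mat{r})} = \widehat{\tensor{X}}_{\HOSVD(\mat{r})}$ on this $\tensor{X}$, and the $N/(1+\varepsilon)$ ratio established in the proof of \Cref{thm:st-hosvd-lower-bound} transfers verbatim to HOOI (together with the observation, already used in that proof, that the HOSVD and ST-HOSVD outputs coincide on this instance).

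The main calculation is to verify the HOOI stationarity condition: for every $n \in [N]$, the columns of the HOSVD factor $\mat{A}^{(n)}$ span the top-$R_n$ left singular subspace of the mode-$n$ unfolding of
\[
    \tensor{Y}^{(n)} \DEF \tensor{X} \times_1 {\mat{A}^{(1)}}^\intercal \times_2 \cdots \times_{n-1} {\mat{A}^{(n-1)}}^\intercal \times_{n+1} {\mat{A}^{(n+1)}}^\intercal \times_{n+2} \cdots \times_N {\mat{A}^{(N)}}^\intercal.
\]
Because the ST-HOSVD adversarial tensor has nonzero support on a $3 \times 3 \times \cdots \times 3$ subgrid and is built to be highly symmetric across modes, the contraction $\tensor{Y}^{(n)}$ should reduce to a small, structured object whose dominant left singular subspace in mode $n$ is immediately readable. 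Checking that this subspace equals the span of $\mat{A}^{(n)}$ is the core per-mode verification; by the mode-symmetry of the construction, it only needs to be done once.

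The main obstacle will be handling potential near-degeneracy between the leading and trailing singular values of the contracted unfoldings: if the contraction inadvertently amplifies a previously subdominant direction, HOOI could pivot away from the HOSVD factors on its first sweep. To guard against this, I would, if necessary, introduce an infinitesimal symmetric perturbation of the construction that breaks any unwanted ties while preserving both the HOSVD output and, up to a negligible inflation of $\varepsilon$, the reconstruction-error ratio guaranteed by \Cref{thm:st-hosvd-lower-bound}. With the fixed-point property verified, HOOI returns exactly the HOSVD solution and the corollary follows.
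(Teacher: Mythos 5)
Your proposal matches the paper's proof essentially verbatim: the paper also argues that the HOSVD initialization $\mat{A}^{(n)} = \begin{bmatrix}\mat{e}_1 & \mat{e}_2 & \mat{e}_3\end{bmatrix}$ is a fixed point of the HOOI sweeps by observing that the contracted unfolding $\mat{B}_{(n)}$ coincides with the matrix $\mat{X}^{(N)}_{(N)}$ already computed in the ST-HOSVD analysis, whose Gram matrix \eqref{eq:xn-gram} has its fourth diagonal entry ($=1$) strictly below the other three, so the top-$3$ left singular subspace is unique and unchanged. Your worry about near-degeneracy is moot here (the only tie, between the two singular values equal to $\sqrt{1+\varepsilon}$, occurs entirely inside the retained subspace), so no perturbation is needed.
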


\subsection{Related Work}

In their seminal work,
\citet{de2000multilinear} introduced the HOSVD for tensors
and gave a strong approximation guarantee on the reconstruction error if we have a multilinear rank constraint.
In a follow-up work, \citet{de2000zest} studied properties of the best rank-$(1,1,\dots,1)$ decomposition and introduced the higher-order orthogonal iteration (HOOI), which uses HOSVD for initialization.
\citet{vannieuwenhoven2012new} proposed an adaptive variant of HOSVD called sequentially truncated HOSVD (ST-HOSVD), which computes each factor matrix $\mat{A}^{(n)}$ based on the current values of $\mat{A}^{(1)}, \mat{A}^{(2)}, \dots, \mat{A}^{(n-1)}$.

\citet{xu2018convergence} studied the convergence rate of HOOI via a surrogate greedy version that updates the factor matrix in each step that is closest to the current iterate.
They give conditions under which greedy HOOI (and hence HOOI) converge to a globally optimal solution.
\citet{ghadiri2023approximately} showed how to provably set the multilinear rank~$\mat{r}$
given a size constraint on the Tucker decomposition
via a connection to higher-order singular values and a knapsack problem with a matroid constraint.

There have also been many works studying continuous methods and the loss landscape for Tucker decompositions.
\citet{elden2009newton} gave an approximate Newton method for computing a rank-$(R_1,R_2,R_3)$ Tucker decomposition on a product of Grassman manifolds.
\citet{elden2011perturbation} then derived expressions for the gradient and the Hessian for this method, along with conditions for a local optimum and a first-order perturbation analysis.
\citet{frandsen2022optimization} characterized the optimization landscape of low-rank Tucker decomposition and showed that if a tensor has an exact Tucker decomposition, with an appropriate regularizer, all local minima are globally optimal.

\section{Preliminaries}
\label{sec:preliminaries}

We follow the notation and setup in the tensor decomposition survey by \citet{kolda2009tensor}.

\subsection{Notation}

The \emph{order} of a tensor is the number of dimensions, also known as ways or modes.
We denote vectors by boldface lowercase letters $\mat{x} \in \R^n$,
matrices by boldface uppercase letters $\mat{X} \in \R^{m \times n}$,
and higher-order tensors by boldface script letters $\tensor{X} \in \R^{I_1 \times I_2 \times \dots \times I_N}$.
We use normal uppercase letters for the size of an index set, e.g., $[N] = \{1,2,\dots,N\}$.
We denote the $i$-th entry of vector $\mat{x}$ by $x_i$,
the $(i,j)$-th entry of matrix $\mat{X}$ by $x_{ij}$,
and the $(i,j,k)$-th entry of a third-order tensor $\tensor{X}$ by $x_{ijk}$.

The \emph{fibers} of a tensor are the higher-order version of matrix rows and columns,
formed by fixing all indices but one.
For example, a third-order tensor has column, row, and tube fibers denoted by $\mat{x}_{:jk}$, $\mat{x}_{i:k}$, and $\mat{x}_{ij:}$, respectively.
The \emph{mode-$n$ unfolding} of a tensor $\tensor{X} \in \R^{I_1 \times I_2 \times \dots \times I_N}$ is the matrix $\mat{X}_{(n)} \in \R^{I_n \times (I_1 \cdots I_{n-1} I_{n+1} \cdots I_N)}$ that arranges the mode-$n$ fibers of $\tensor{X}$ as columns of $\mat{X}_{(n)}$, ordered lexicographically by index.

A tensor is cubical if all modes are the same size, i.e., $\tensor{X} \in \R^{I \times I \times \dots \times I}$.
A cubical tensor is \emph{symmetric} if its elements remain constant under any permutation of the indices.
For example, the third-order tensor $\tensor{X} \in \R^{I \times I \times I}$ is symmetric if,
for all $i,j,k \in [I]$,
\[
    x_{ijk}
    =
    x_{ikj}
    =
    x_{jik}
    =
    x_{jki}
    =
    x_{kij}
    =
    x_{kji}.
\]

\subsection{Tensor Products}
\label{subsec:tensor_products}

The \emph{$n$-mode product} of a tensor $\tensor{X} \in \R^{I_1 \times I_2 \times \dots \times I_N}$ and matrix $\mat{U} \in \R^{J \times I_n}$ is denoted by $\tensor{X} \times_{n} \mat{U}$ and has shape $I_1 \times \dots \times I_{n-1} \times J \times I_{n} \times \dots \times I_N$.
Elementwise, we have
\[
    \parens*{\tensor{X} \times_n \mat{U}}_{i_1 \dots i_{n-1} j i_{n+1} \dots i_N}
    =
    \sum_{i_n = 1}^{I_n} x_{i_1 i_2 \dots i_N} u_{j i_n}.
\]
In other words, each mode-$n$ fiber is multiplied by the matrix $\mat{U}$. We can express this in terms of unfolded tensors as:
\begin{equation}
\label{eqn:n_mode_product_unfolded}
    \tensor{Y} = \tensor{X} \times_{n} \mat{U}
    \iff
    \mat{Y}_{(n)} = \mat{U}\mat{X}_{(n)}.
\end{equation}

For distinct modes in a sequence of $n$-mode products, the order of the multiplication does not matter, i.e.,
\begin{align*}
    \tensor{X} \times_{m} \mat{A} \times_{n} \mat{B}
    =
    \tensor{X} \times_{n} \mat{B} \times_{m} \mat{A}
    \quad
    (m \ne n).
\end{align*}
If the modes are the same, then
\begin{align}
\label{eqn:n_mode_product_same}
    \tensor{X} \times_{n} \mat{A} \times_{n} \mat{B}
    =
    \tensor{X} \times_{n} (\mat{B} \mat{A}).
\end{align}

The \emph{inner product} of two tensors $\tensor{X}, \tensor{Y} \in \R^{I_1 \times I_2 \times \dots \times I_N}$ is the sum of the products of their entries:
\[
    \inner*{\tensor{X}, \tensor{Y}}
    =
    \sum_{i_1 = 1}^{I_1}
    \sum_{i_2 = 1}^{I_2}
    \cdots
    \sum_{i_N = 1}^{I_N}
    x_{i_1 i_2 \dots i_N}
    y_{i_1 i_2 \dots i_N}.
\]
The Frobenius norm of a tensor $\tensor{X}$ is $\norm{\tensor{X}}_{\frobenius} = \sqrt{\inner{\tensor{X}, \tensor{X}}}$.

\subsection{Tucker Decomposition}

A \emph{rank-$\mat{r}$ Tucker decomposition} approximates $\tensor{X} \in \R^{I_1 \times I_2 \times \dots \times I_N}$ by factoring it into a
\emph{core tensor} $\tensor{G} \in \R^{R_1 \times R_2 \times \dots \times R_N}$
and $N$ \emph{factor matrices} $\mat{A}^{(n)} \in \R^{I_n \times R_n}$,
where $\mat{r} = (R_1, R_2, \dots, R_N)$ is the multilinear rank.
The approximation $\widehat{\tensor{X}}$ is defined as
\[
    \widehat{\tensor{X}} = \tensor{G} \times_1 \mat{A}^{(1)} \times_2 \mat{A}^{(2)} \times_3 \dots \times_N \mat{A}^{(N)}.
\]
Finding the best rank-$\mat{r}$ Tucker decomposition is a convex and NP-hard optimization problem~\citep{hillar2013most} that aims to minimize the reconstruction error:
\begin{align}
\label{eqn:optimal_loss}
    L(\tensor{X}, \mat{r})
    \defeq
    \min_{\tensor{G}, \mat{A}^{(1)}, \dots, \mat{A}^{(N)}}
    \norm{\tensor{X} - \tensor{G} \times_1 \mat{A}^{(1)} \times_2 \mat{A}^{(2)} \times_3 \dots \times_N \mat{A}^{(N)}}_{\frobenius}^2.
\end{align}

The following result from \citet[Section 4.2]{kolda2009tensor} establishes a property that is useful for computing Tucker decompositions.
We include its proof in \Cref{app:orthonormal_columns} for completeness.

\begin{restatable}[]{lemma}{OrthonormalColumns}
\label{lem:orthonormal_columns}
Without loss of generality, we can assume each factor matrix $\mat{A}^{(n)} \in \R^{I_n \times R_n}$ has orthonormal columns.
If each $\mat{A}^{(n)} \in \R^{I_n \times R_n}$ has orthonormal columns and is fixed,
the core tensor that minimizes the reconstruction error is
\begin{equation}
\label{eqn:optimal_core_tensor}
    \tensor{G}
    = 
    \tensor{X} \times_{1} {\mat{A}^{(1)}}^\intercal \times_{2} {\mat{A}^{(2)}}^\intercal \times_{3} \dots \times_{N} {\mat{A}^{(N)}}^\intercal,
\end{equation}
and the reconstructed tensor is
\begin{align*}
  \widehat{\tensor{X}}
  &=
  \tensor{X} \times_{1} \parens{\mat{A}^{(1)} {\mat{A}^{(1)}}^\intercal}
  \times_{2} \parens{\mat{A}^{(2)} {\mat{A}^{(2)}}^\intercal}
  \times_{3}
  \dots
  \times_{N} \parens{\mat{A}^{(N)} {\mat{A}^{(N)}}^\intercal}.
\end{align*}
\end{restatable}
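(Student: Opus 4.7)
The plan is to handle the three claims of the lemma in the order stated. For the ``without loss of generality'' part, I would take any Tucker decomposition with factor matrices $\mat{A}^{(n)}$ and compute a thin QR factorization $\mat{A}^{(n)} = \mat{Q}^{(n)} \mat{R}^{(n)}$ for each mode, where $\mat{Q}^{(n)} \in \R^{I_n \times R_n}$ has orthonormal columns and $\mat{R}^{(n)} \in \R^{R_n \times R_n}$ is upper triangular. Using \Cref{eqn:n_mode_product_same}, I can absorb every $\mat{R}^{(n)}$ into the core by setting $\tensor{G}' \defeq \tensor{G} \times_1 \mat{R}^{(1)} \times_2 \cdots \times_N \mat{R}^{(N)}$, which satisfies $\tensor{G}' \times_1 \mat{Q}^{(1)} \times_2 \cdots \times_N \mat{Q}^{(N)} = \tensor{G} \times_1 \mat{A}^{(1)} \times_2 \cdots \times_N \mat{A}^{(N)}$ so the reconstruction error is unchanged.

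For the optimal core tensor, I would vectorize everything. Let $\mat{A} \defeq \mat{A}^{(N)} \otimes \mat{A}^{(N-1)} \otimes \cdots \otimes \mat{A}^{(1)}$ and use the standard identity $\mathrm{vec}(\tensor{G} \times_1 \mat{A}^{(1)} \times_2 \cdots \times_N \mat{A}^{(N)}) = \mat{A}\,\mathrm{vec}(\tensor{G})$. Because the Kronecker product of matrices with orthonormal columns also has orthonormal columns, $\mat{A}^\intercal \mat{A} = \mat{I}$, so the least-squares problem
\[
    \min_{\tensor{G}} \bigl\lVert \mathrm{vec}(\tensor{X}) - \mat{A}\,\mathrm{vec}(\tensor{G}) \bigr\rVert_2^2
\]
has the closed-form solution $\mathrm{vec}(\tensor{G}^\star) = \mat{A}^\intercal \mathrm{vec}(\tensor{X})$. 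Translating the Kronecker-product projection back through the same vectorization identity (applied to $\mat{A}^\intercal = {\mat{A}^{(N)}}^\intercal \otimes \cdots \otimes {\mat{A}^{(1)}}^\intercal$) yields $\tensor{G}^\star = \tensor{X} \times_1 {\mat{A}^{(1)}}^\intercal \times_2 \cdots \times_N {\mat{A}^{(N)}}^\intercal$, which is exactly \Cref{eqn:optimal_core_tensor}.

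Finally, for the reconstructed tensor, I would plug this optimal core back into the Tucker form. Interleaving the mode products using commutativity across distinct modes and then collapsing each mode via $\tensor{X} \times_n {\mat{A}^{(n)}}^\intercal \times_n \mat{A}^{(n)} = \tensor{X} \times_n (\mat{A}^{(n)} {\mat{A}^{(n)}}^\intercal)$ from \Cref{eqn:n_mode_product_same} gives the stated expression for $\widehat{\tensor{X}}$.

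The only place requiring any care is the vectorization step: I would either cite the standard identity $\mathrm{vec}(\tensor{G} \times_1 \mat{A}^{(1)} \times_2 \cdots \times_N \mat{A}^{(N)}) = (\mat{A}^{(N)} \otimes \cdots \otimes \mat{A}^{(1)})\,\mathrm{vec}(\tensor{G})$ from \citet{kolda2009tensor}, or prove it by induction on $N$ from the matrix identity $\mathrm{vec}(\mat{A}\mat{G}\mat{B}^\intercal) = (\mat{B} \otimes \mat{A})\,\mathrm{vec}(\mat{G})$ combined with \Cref{eqn:n_mode_product_unfolded}; everything else is routine linear algebra.
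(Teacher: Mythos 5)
Your proposal is correct and follows essentially the same route as the paper: a thin QR factorization absorbed into the core for the ``without loss of generality'' claim, vectorization of the objective into an ordinary least-squares problem over a Kronecker-product design matrix with orthonormal columns for the optimal core, and the $n$-mode product identities to obtain the projected reconstruction. The only differences are cosmetic (you order the Kronecker factors by the column-major convention while the paper uses its lexicographic vectorization, and you invoke $\mat{A}^\intercal\mat{A}=\mat{I}$ directly where the paper routes through pseudoinverse properties).
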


\subsection{Higher-Order Singular Value Decomposition (HOSVD)}

The \emph{higher-order singular value decomposition} (HOSVD) algorithm of \citet{de2000multilinear} computes each factor matrix $\mat{A}^{(n)}$ as the top-$R_n$ left singular vectors of the mode-$n$ unfolding $\mat{X}_{(n)}$. Since the factor matrices have orthonormal columns by construction, the optimal core tensor $\tensor{G}$ can be calculated directly using \eqref{eqn:optimal_core_tensor}.

\begin{algorithm}[H]
\caption{HOSVD}
\label{alg:hosvd}
\textbf{Input:} tensor $\tensor{X} \in \R^{I_1 \times I_2 \times \cdots \times I_N}$,
 rank $\mat{r} = (R_1,R_2,\dots,R_N)$

\begin{algorithmic}[1]
    \For{$n=1$ to $N$}
        \State $\mat{A}^{(n)} \gets R_n \text{ top left singular vectors of } \mat{X}_{(n)}$
    \EndFor
    \State $\tensor{G} \gets \tensor{X} \times_{1} {\mat{A}^{(1)}}^\intercal \times_{2} {\mat{A}^{(2)}}^\intercal \times_{3} \dots \times_{N} {\mat{A}^{(N)}}^\intercal$
    \State \textbf{return}
      $\tensor{G}, \mat{A}^{(1)}, \mat{A}^{(2)}, \dots, \mat{A}^{(N)}$ 
\end{algorithmic}
\end{algorithm}

A key theoretical property of HOSVD is that its reconstruction error is bounded relative to the optimal rank-$\mat{r}$ approximation. This result, shown in \citet[Property 10]{de2000multilinear} and \citet[Theorem 10.2]{hackbusch2019tensor}, is summarized in the following theorem.

\begin{theorem}[HOSVD upper bound]
\label{thm:hosvd}
For any $\tensor{X} \in \R^{I_1 \times I_2 \times \dots \times I_N}$
and $\mat{r} \in [I_1] \times \dots \times [I_N]$,
let the output of HOSVD be $\tensor{G} \in \R^{R_1 \times R_2 \times \dots \times R_N}$
and $\mat{A}^{(n)} \in \R^{I_n \times R_n}$, for $n \in [N]$.
Denoting the reconstructed rank-$\mat{r}$ tensor as
\begin{align*}
    \widehat{\tensor{X}}_{\HOSVD(\mat{r})}
    \defeq
    \tensor{G}
    \times_1 \mat{A}^{(1)}
    \times_2 \mat{A}^{(2)}
    \times_3
    \dots
    \times_N \mat{A}^{(N)},
\end{align*}
we have
\[
    \norm{\tensor{X} - \widehat{\tensor{X}}_{\HOSVD(\mat{r})}}_{\frobenius}^2
    \le
    \sum_{n=1}^N \sum_{i_n = R_n + 1}^{I_n} \parens*{\sigma_{i_n}^{(n)}}^2
    \le
    N \cdot L(\tensor{X}, \mat{r}),
\]
where $\sigma_{i_n}^{(n)}$ are the singular values of the mode-$n$ unfolding $\mat{X}_{(n)}$, sorted in nondecreasing order.
\end{theorem}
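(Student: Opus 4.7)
The plan is to prove the two inequalities in sequence, starting from the tail singular-value bound and then relating each mode's contribution to $L(\tensor{X},\mat{r})$.

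For the first inequality, I would introduce the orthogonal projectors $\mat{P}^{(n)} \DEF \mat{A}^{(n)}{\mat{A}^{(n)}}^\intercal$ and the chain of partial projections $\tensor{X}^{(0)} \DEF \tensor{X}$ and $\tensor{X}^{(n)} \DEF \tensor{X}^{(n-1)} \times_n \mat{P}^{(n)}$ for $n \in [N]$. By \Cref{lem:orthonormal_columns}, the HOSVD reconstruction is exactly $\widehat{\tensor{X}}_{\HOSVD(\mat{r})} = \tensor{X}^{(N)}$. Telescoping gives
\[
    \tensor{X} - \widehat{\tensor{X}}_{\HOSVD(\mat{r})}
    =
    \sum_{n=1}^{N} \bigl(\tensor{X}^{(n-1)} - \tensor{X}^{(n)}\bigr).
\]
The key structural observation is that the summands are mutually orthogonal in the Frobenius inner product: $\tensor{X}^{(n-1)} - \tensor{X}^{(n)} = \tensor{X}^{(n-1)} \times_n (\mat{I} - \mat{P}^{(n)})$ lies in the range of $\mat{I} - \mat{P}^{(n)}$ along mode $n$, while any later difference $\tensor{X}^{(m-1)} - \tensor{X}^{(m)}$ with $m > n$ has its mode-$n$ fibers in the range of $\mat{P}^{(n)}$ (since all later steps preserve the $\mat{P}^{(n)}$ projection already applied). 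Hence Pythagoras gives $\norm{\tensor{X} - \widehat{\tensor{X}}_{\HOSVD(\mat{r})}}_{\frobenius}^2 = \sum_{n=1}^N \norm{\tensor{X}^{(n-1)} - \tensor{X}^{(n)}}_{\frobenius}^2$. Since projections only decrease Frobenius norm, $\norm{\tensor{X}^{(n-1)} - \tensor{X}^{(n)}}_{\frobenius}^2 \le \norm{\tensor{X} - \tensor{X}\times_n \mat{P}^{(n)}}_{\frobenius}^2$, and by \eqref{eqn:n_mode_product_unfolded} the right-hand side equals $\norm{\mat{X}_{(n)} - \mat{P}^{(n)} \mat{X}_{(n)}}_{\frobenius}^2 = \sum_{i_n = R_n+1}^{I_n} (\sigma_{i_n}^{(n)})^2$ by the Eckart--Young--Mirsky theorem applied to the mode-$n$ unfolding, using that $\mat{A}^{(n)}$ holds the top-$R_n$ left singular vectors of $\mat{X}_{(n)}$.

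For the second inequality, fix any optimal rank-$\mat{r}$ Tucker decomposition with orthonormal factor matrices $\mat{B}^{(1)},\ldots,\mat{B}^{(N)}$ and core tensor $\tensor{H}$, so $L(\tensor{X},\mat{r}) = \norm{\tensor{X} - \tensor{H} \times_1 \mat{B}^{(1)} \cdots \times_N \mat{B}^{(N)}}_{\frobenius}^2$. For each mode $n$, \Cref{lem:orthonormal_columns} and a simple ``only-mode-$n$'' relaxation show
\[
    L(\tensor{X},\mat{r})
    \ge
    \norm{\tensor{X} - \tensor{X}\times_n \mat{B}^{(n)}{\mat{B}^{(n)}}^\intercal}_{\frobenius}^2
    =
    \norm{\mat{X}_{(n)} - \mat{B}^{(n)}{\mat{B}^{(n)}}^\intercal \mat{X}_{(n)}}_{\frobenius}^2
    \ge
    \sum_{i_n=R_n+1}^{I_n} \bigl(\sigma_{i_n}^{(n)}\bigr)^2,
\]
where the first inequality uses that projecting only along mode $n$ is one of many feasible Tucker approximations dominated by the optimum, and the last inequality is again Eckart--Young--Mirsky. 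Summing over $n\in[N]$ yields $\sum_n \sum_{i_n > R_n} (\sigma_{i_n}^{(n)})^2 \le N \cdot L(\tensor{X}, \mat{r})$, completing the chain.

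The main obstacle is the orthogonality step justifying Pythagoras for the telescoping sum; this requires carefully tracking that the mode-$n$ component of each difference $\tensor{X}^{(n-1)} - \tensor{X}^{(n)}$ lies in the orthogonal complement of the range of $\mat{P}^{(n)}$ while all subsequent differences have their mode-$n$ fibers untouched by $\mat{I} - \mat{P}^{(n)}$ (because of \eqref{eqn:n_mode_product_same} and $\mat{P}^{(n)}(\mat{I} - \mat{P}^{(n)}) = \mat{0}$). A clean way is to argue via the unfolding identity: after unfolding along mode $n$, the earlier differences become columns in $\operatorname{range}(\mat{I} - \mat{P}^{(n)})$ while the later differences become columns in $\operatorname{range}(\mat{P}^{(n)})$, making orthogonality immediate. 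Once this is in place, everything else is routine manipulation with unfoldings and SVD optimality.
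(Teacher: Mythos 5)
The paper does not prove \Cref{thm:hosvd} at all; it is stated as a known result and attributed to \citet[Property 10]{de2000multilinear} and \citet[Theorem 10.2]{hackbusch2019tensor}, so there is no in-paper proof to compare against. Your argument is the standard one from that literature and is essentially correct: the telescoping decomposition $\tensor{X}-\widehat{\tensor{X}}_{\HOSVD(\mat{r})}=\sum_n(\tensor{X}^{(n-1)}-\tensor{X}^{(n)})$, the pairwise orthogonality obtained by unfolding along the smaller of the two mode indices (the $n$-th difference unfolds to $(\mat{I}-\mat{P}^{(n)})(\cdot)$ while every later difference unfolds to $\mat{P}^{(n)}(\cdot)$, since $\mat{P}^{(n)}$ has already been applied and commutes with later mode products), the norm contraction under the projections in modes $1,\dots,n-1$, and the identification of $\norm{(\mat{I}-\mat{P}^{(n)})\mat{X}_{(n)}}_{\frobenius}^2$ with the tail singular values all go through.

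One justification in the second half is stated backwards, though the inequality you write is still true. You claim $L(\tensor{X},\mat{r})\ge\norm{\tensor{X}-\tensor{X}\times_n\mat{B}^{(n)}{\mat{B}^{(n)}}^\intercal}_{\frobenius}^2$ ``because projecting only along mode $n$ is one of many feasible Tucker approximations dominated by the optimum.'' If $\tensor{X}\times_n\mat{B}^{(n)}{\mat{B}^{(n)}}^\intercal$ were a feasible rank-$\mat{r}$ candidate (it is not, since it only restricts mode $n$), optimality of $L$ would give the \emph{reverse} inequality. The correct reason is that the mode-$n$ unfolding of the optimal reconstruction has all columns in the range of $\mat{B}^{(n)}$, and among all such matrices the orthogonal projection $\mat{B}^{(n)}{\mat{B}^{(n)}}^\intercal\mat{X}_{(n)}$ is closest to $\mat{X}_{(n)}$; alternatively, skip the intermediate step entirely and note that the optimal reconstruction's mode-$n$ unfolding has rank at most $R_n$, so Eckart--Young--Mirsky gives $L(\tensor{X},\mat{r})\ge\sum_{i_n>R_n}(\sigma_{i_n}^{(n)})^2$ directly. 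With that one fix the proof is complete.
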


This quality guarantee makes HOSVD a popular method for initializing iterative Tucker decomposition algorithms, such as the \emph{higher-order orthogonal iteration} (HOOI) in \citet{de2000zest}.
Subsequent updates to the factor matrices and core tensor can only improve the solution, so HOSVD provides a robust starting point.
This is analogous to the $k$-means++ algorithm~\citep{arthur2006k}, which uses a provably good seeding step to find initial cluster centers before refining them with Lloyd's algorithm.

The rest of the paper is devoted to proving the tightness of the approximation result in \citet{de2000multilinear}.
We show that it is impossible to improve this guarantee to $N / (1+\varepsilon)$, for any $\varepsilon > 0$.
To the best of our knowledge, this fundamental result about the optimality of the HOSVD analysis has been missing from the literature, despite the algorithm's prominence.

\section{Lower Bounding the Reconstruction Error}
\label{sec:lower_bound}

\subsection{Simple Construction}
\label{subsec:construction}

We start by constructing an adversarial symmetric input to demonstrate a failure case for the HOSVD algorithm.

\begin{tcolorbox}
\begin{definition}[Simple construction]
\label{def:construction}
For any $N \ge 2$ and $\varepsilon > 0$,
let the symmetric tensor $\tensor{X} \in \R^{3 \times 3 \times \dots \times 3}$
be the sum of two components, $\tensor{X} = \tensor{Y} + \tensor{Z}$,
defined as follows:
\begin{itemize}
    \item The \emph{top component} $\tensor{Y}$ has one nonzero value:
    \[
        y[i_1,i_2,\dots,i_N]
        =
        \begin{cases}
            \sqrt{1 + \varepsilon} & \text{if $(i_1, i_2, \dots, i_N) = (1,1,\dots,1)$,} \\
            0 & \text{otherwise.}
        \end{cases}
    \]
    \item The \emph{bottom component} $\tensor{Z}$ has $N$ nonzeros values:
    \[
        z[i_1,i_2,\dots,i_N]
        =
        \begin{cases}
            1 & \text{if $(i_1,i_2,\dots, i_N) = (2,3,\dots,3)$,} \\
            1 & \text{if $(i_1,i_2,\dots, i_N) = (3,2,\dots,3)$,} \\
            \vdots \\
            1 & \text{if $(i_1,i_2,\dots, i_N) = (3,3,\dots,2)$,} \\
            0 & \text{otherwise.}
        \end{cases}
  \]
\end{itemize}
We use tensor $\tensor{X}$ with the target rank $\mat{r} = (2,2,\dots,2)$ to demonstrate the worst-case behavior of HOSVD.
\end{definition}
\end{tcolorbox}

\paragraph{Examples.}
For $N=2$, we have
\begin{align*}
  \mat{X}
  =
  \begin{bmatrix}
    \sqrt{1 + \varepsilon} & 0 & 0 \\
    0 & 0 & 1 \\
    0 & 1 & 0
  \end{bmatrix}.
\end{align*}
For $N=3$, the slices of $\tensor{X}$ are
\begin{align*}
  \mat{X}_{1::}
  &=
  \begin{bmatrix}
    \sqrt{1 + \varepsilon} & 0 & 0 \\
    0 & 0 & 0 \\
    0 & 0 & 0
  \end{bmatrix}
  \quad
  \mat{X}_{2::}
  =
  \begin{bmatrix}
    0 & 0 & 0 \\
    0 & 0 & 0 \\
    0 & 0 & 1
  \end{bmatrix}
  \quad
  \mat{X}_{3::}
  =
  \begin{bmatrix}
    0 & 0 & 0 \\
    0 & 0 & 1 \\
    0 & 1 & 0
  \end{bmatrix}.
\end{align*}

\paragraph{Key idea.}
Our construction exploits the greedy nature of HOSVD.
The main idea is to start with tensor $\tensor{Z}$ since it is perfectly recoverable by a rank-$(2,2,\dots,2)$ Tucker decomposition (\Cref{lem:alternative_decomposition}).
Then we modify $\tensor{Z}$ by increasing the value at $(1,1,\dots,1)$
from $0 \rightarrow \sqrt{1 + \varepsilon}$, giving us $\tensor{X}$.
HOSVD greedily focuses on this new high-value entry, i.e., the top component $\tensor{Y}$.
Due to the rank constraint $\mat{r} = (2,2,\dots,2)$,
recovering the top component forces HOSVD to drop the entire bottom component (\Cref{lem:hosvd_error}).
Since HOSVD computes each factor matrix $\mat{A}^{(n)}$ \emph{independently}
using the SVD of $\mat{X}_{(n)}$,
our symmetric construction tricks it into making the same mistake for all $N$ modes.
These greedy errors accumulate, leading to a reconstruction error of $N$
in the squared Frobenius norm, while $(1 + \varepsilon)$ is achievable.

\subsection{HOSVD Analysis}

Here we show that $\widehat{\tensor{X}}_{\HOSVD(\mat{r})} = \tensor{Y}$,
i.e., running HOSVD on our adversarial input only recovers the top component.

\begin{fact}
\label{fact:symmetry}
A symmetric tensor $\tensor{X}$ has identical mode-$n$ unfoldings:
\[
    \mat{X}_{(1)} = \mat{X}_{(2)} = \dots = \mat{X}_{(N)}.
\]
Consequently, if their left singular matrices $\mat{U}^{(n)}$ are computed using a deterministic SVD algorithm in HOSVD, they are also identical:
\[
  \mat{U}^{(1)} = \mat{U}^{(2)} = \dots = \mat{U}^{(N)}.
\]
\end{fact}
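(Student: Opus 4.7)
The plan is to prove the two assertions of the fact separately, since the second follows trivially from the first once the first is in hand. For the equality of unfoldings, I would fix an arbitrary row index $r \in [I]$ and column index $k \in [I^{N-1}]$, and argue that the entry $(\mat{X}_{(n)})_{r,k}$ does not depend on $n$. Under the lex-ordering convention used in the paper, column $k$ corresponds to a single tuple $(j_1,j_2,\dots,j_{N-1}) \in [I]^{N-1}$, and this tuple is the \emph{same} for every mode $n$ because the tensor is cubical (so the indexing map $[I]^{N-1} \to [I^{N-1}]$ does not depend on which mode has been removed). The $(r,k)$-entry of $\mat{X}_{(n)}$ is then $x_{j_1,\dots,j_{n-1},\,r,\,j_n,\dots,j_{N-1}}$, that is, the tensor entry obtained by inserting $r$ into the $n$-th slot of the tuple $(j_1,\dots,j_{N-1})$.

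Any two such insertions into the $n$-th versus $m$-th slot differ by a permutation of the $N$ coordinate positions of $\tensor{X}$, so by the definition of symmetry these entries agree. Summing over all $(r,k)$ yields $\mat{X}_{(1)}=\mat{X}_{(2)}=\cdots=\mat{X}_{(N)}$. For the second claim, feeding the same input matrix into the same deterministic SVD routine produces the same output, so in particular the left singular matrices coincide: $\mat{U}^{(1)} = \mat{U}^{(2)} = \cdots = \mat{U}^{(N)}$.

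The only delicate point I anticipate is the column-indexing bookkeeping, since the mode-$n$ unfolding suppresses a mode-dependent coordinate and it is not immediately obvious that the ordering of the surviving $N-1$ coordinates is consistent across modes. This is handled by the cubical assumption: the lex-order on $[I]^{N-1}$ is intrinsic to the tuple and does not depend on which coordinate of $\tensor{X}$ was removed to form that tuple. Once this observation is made, the remainder of the argument is a one-line invocation of the symmetry definition, and the fact follows.
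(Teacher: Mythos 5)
Your proof is correct. The paper states this as a Fact with no proof at all, so there is no argument to compare against; what you give is the standard verification, and you correctly isolate and resolve the one genuinely delicate point — that for a cubical tensor the lexicographic column ordering of the $N-1$ surviving indices is the same for every mode, so the $(r,k)$ entries of $\mat{X}_{(n)}$ and $\mat{X}_{(m)}$ are tensor entries whose index tuples differ by a permutation of positions and hence coincide by symmetry. The deduction of $\mat{U}^{(1)}=\cdots=\mat{U}^{(N)}$ from determinism of the SVD routine is as immediate as you say.
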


A deterministic SVD is critical to our analysis because it \emph{consistently breaks ties} caused by repeated singular values when computing the factor matrix $\mat{A}^{(n)}$ in each step of HOSVD.
We explore this in more detail in \Cref{app:svd} when exploring the difference between SVD and HOSVD for $N = 2$.

\begin{lemma}
\label{lem:left_singular_values}
For any $N \ge 2$ and $n \in [N]$,
the matrix $\mat{X}_{(n)}$ has the following left singular vectors and singular values:
\begin{itemize}
  \item $(\mat{e}_1, \sqrt{1+\varepsilon})$,
  \item $(\mat{e}_2, 1)$,
  \item $(\mat{e}_3, \sqrt{N-1})$.
\end{itemize}
\end{lemma}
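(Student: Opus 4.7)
The plan is to exploit symmetry to reduce to analyzing only $\mat{X}_{(1)}$, then show that its three rows have pairwise disjoint supports so that $\mat{X}_{(1)} \mat{X}_{(1)}^\intercal$ is diagonal, which immediately reveals the singular pairs.

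First, by \Cref{fact:symmetry} the Gram matrix $\mat{X}_{(n)} \mat{X}_{(n)}^\intercal$ does not depend on $n$: for a symmetric tensor the mode-$n$ unfoldings differ only by a column permutation of the ``other'' indices, and such a permutation vanishes once we right-multiply by the transpose. It therefore suffices to compute the singular structure of $\mat{X}_{(1)} \in \R^{3 \times 3^{N-1}}$, whose rows are indexed by $i_1 \in \{1,2,3\}$ and whose columns are indexed by $(i_2, \dots, i_N) \in \{1,2,3\}^{N-1}$.

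Second, I would read off the support of each row directly from \Cref{def:construction}. Row~$1$ ($i_1 = 1$) receives a contribution only from $\tensor{Y}$, namely the single entry $\sqrt{1+\varepsilon}$ at column $(1,\dots,1)$, since every nonzero of $\tensor{Z}$ has first index $2$ or $3$. Row~$2$ ($i_1 = 2$) receives only the $\tensor{Z}$ entry coming from the tuple $(2,3,\dots,3)$, placed at column $(3,\dots,3)$ with value $1$. Row~$3$ ($i_1 = 3$) receives exactly $N-1$ entries of value $1$ from $\tensor{Z}$, one at each column obtained by placing a single $2$ in some position $k \in \{2,3,\dots,N\}$ and $3$'s elsewhere. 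These three column patterns are pairwise distinct elements of $\{1,2,3\}^{N-1}$, so the three row supports are disjoint and the rows are mutually orthogonal.

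Third, the squared row norms are $1+\varepsilon$, $1$, and $N-1$ respectively, so $\mat{X}_{(1)} \mat{X}_{(1)}^\intercal = \diag(1+\varepsilon,\,1,\,N-1)$. Reading off the eigendecomposition of this diagonal matrix yields the three claimed singular pairs $(\mat{e}_1,\sqrt{1+\varepsilon})$, $(\mat{e}_2,1)$, and $(\mat{e}_3,\sqrt{N-1})$. There is no substantial obstacle here; the only bookkeeping is identifying the support of $\tensor{Z}$ in the mode-$1$ unfolding, and the symmetry reduction lets us do this calculation only once.
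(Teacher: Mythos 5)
Your proposal is correct and follows essentially the same route as the paper: reduce to $\mat{X}_{(1)}$ by symmetry, observe that the Gram matrix $\mat{X}_{(1)}\mat{X}_{(1)}^\intercal$ is diagonal (the paper notes each column has at most one nonzero, you note the rows have disjoint supports --- equivalent observations), compute the diagonal entries $1+\varepsilon$, $1$, $N-1$, and read off the eigenpairs. Your more explicit bookkeeping of the row supports is a fine elaboration but not a different argument.
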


\begin{proof}
By symmetry (\Cref{fact:symmetry}), it suffices to analyze $\mat{X}_{(1)}$.
Each column of $\mat{X}_{(1)}$ has at most one nonzero entry, so the matrix $\mat{X}_{(1)} \mat{X}_{(1)}^\intercal$ is diagonal:
\begin{align*}
    \mat{X}_{(1)} \mat{X}_{(1)}^\intercal
    =
    \begin{bmatrix}
        1 + \varepsilon & 0 & 0 \\
        0 & 1 & 0 \\
        0 & 0 & N - 1
    \end{bmatrix}.
\end{align*}
The eigendecomposition of a diagonal matrix has the
standard basis vectors $\mat{e}_i$ as its eigenvectors, so
\begin{align*}
    \mat{X}_{(1)} \mat{X}_{(1)}^\intercal
    =
    \mat{U}^{(1)}
    \mat{\Sigma}^2
    {\mat{U}^{(1)}}^\intercal
    =
    \begin{bmatrix}
      1 & 0 & 0 \\
      0 & 1 & 0 \\
      0 & 0 & 1 \\
    \end{bmatrix}
    \begin{bmatrix}
        1 + \varepsilon & 0 & 0 \\
        0 & 1 & 0 \\
        0 & 0 & N - 1
    \end{bmatrix}
    \begin{bmatrix}
      1 & 0 & 0 \\
      0 & 1 & 0 \\
      0 & 0 & 1 \\
    \end{bmatrix}^\intercal,
\end{align*}
which proves the claim.
\end{proof}

We now analyze HOSVD for our adversarial input $\tensor{X} = \tensor{Y} + \tensor{Z}$ with multilinear rank $\mat{r} = (2,2,\dots,2)$.

\begin{lemma}
\label{lem:hosvd_error}
For any $N \ge 2$ and $\varepsilon > 0$,
the HOSVD of our construction $(\tensor{X}, \mat{r})$ only recovers the top component, i.e.,
\[
  \widehat{\tensor{X}}_{\HOSVD(\mat{r})}
  =
  \tensor{Y}.
\]
The reconstruction error is
\[
  \norm{\tensor{X} - \widehat{\tensor{X}}_{\HOSVD(\mat{r})}}_{\frobenius}^2 = N.
\]
\end{lemma}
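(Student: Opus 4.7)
The plan is to combine the singular-value structure from the preceding lemma with the symmetry of the construction to pin down the factor matrices produced by HOSVD, then apply \Cref{lem:orthonormal_columns} to describe $\widehat{\tensor{X}}_{\HOSVD(\mat{r})}$ as a product of coordinate projections. First I would invoke \Cref{lem:left_singular_values} to identify the three singular triples of each $\mat{X}_{(n)}$. Since $\sqrt{1+\varepsilon} > 1$ and $\sqrt{N-1} \ge 1$ for any $\varepsilon > 0$ and $N \ge 2$, the two largest singular values correspond to $\mat{e}_1$ and $\mat{e}_3$, while the smallest corresponds to $\mat{e}_2$. The only edge case is $N = 2$, where $\mat{e}_2$ and $\mat{e}_3$ share the singular value $1$; here \Cref{fact:symmetry} allows us to fix a deterministic SVD that breaks this tie in favor of $\mat{e}_3$ uniformly across the modes. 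Combined with \Cref{fact:symmetry}, this gives $\mat{A}^{(n)} = [\mat{e}_1, \mat{e}_3]$ for every $n \in [N]$, so $\mat{A}^{(n)} {\mat{A}^{(n)}}^\intercal = \mat{P}$, where $\mat{P} \defeq \mat{e}_1\mat{e}_1^\intercal + \mat{e}_3\mat{e}_3^\intercal$ is the orthogonal projector onto $\mathrm{span}(\mat{e}_1, \mat{e}_3)$.

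Next, \Cref{lem:orthonormal_columns} yields $\widehat{\tensor{X}}_{\HOSVD(\mat{r})} = \tensor{X} \times_1 \mat{P} \times_2 \mat{P} \times_3 \cdots \times_N \mat{P}$. The $n$-mode product by $\mat{P}$ acts by zeroing out any entry whose mode-$n$ index equals $2$, so I would analyze $\tensor{Y}$ and $\tensor{Z}$ separately under this linear map. The tensor $\tensor{Y}$ is supported only at $(1,1,\dots,1)$, none of whose indices equals $2$, so every projection fixes it. Each nonzero of $\tensor{Z}$, on the other hand, has exactly one coordinate equal to $2$, so the projection in that mode annihilates it. Hence $\tensor{Z}$ is mapped to the zero tensor and $\widehat{\tensor{X}}_{\HOSVD(\mat{r})} = \tensor{Y}$.

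Finally, the reconstruction error becomes $\norm{\tensor{X} - \widehat{\tensor{X}}_{\HOSVD(\mat{r})}}_{\frobenius}^2 = \norm{\tensor{Z}}_{\frobenius}^2 = N$, since $\tensor{Z}$ has exactly $N$ nonzero entries each equal to $1$. The only subtlety is the $N = 2$ tie among singular values; once that is resolved by the deterministic tie-breaking built into \Cref{fact:symmetry}, the rest is straightforward bookkeeping on which coordinate projections kill which entries.
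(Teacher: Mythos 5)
Your proposal is correct and follows essentially the same route as the paper's proof: identify the singular triples via \Cref{lem:left_singular_values}, use \Cref{fact:symmetry} to make all factor matrices equal, apply \Cref{lem:orthonormal_columns} to write the reconstruction as repeated $n$-mode products with a coordinate projector, and observe that the projector kills every nonzero of $\tensor{Z}$ while fixing $\tensor{Y}$. The one place you diverge is the $N=2$ tie: \Cref{fact:symmetry} only guarantees that the tie between $\mat{e}_2$ and $\mat{e}_3$ is broken \emph{consistently} across modes, not that you get to break it in favor of $\mat{e}_3$, so your argument as written covers only one of the two possible outputs of a deterministic HOSVD. The paper instead treats both cases, $\mat{A}^{(n)} = [\mat{e}_1, \mat{e}_2]$ and $\mat{A}^{(n)} = [\mat{e}_1, \mat{e}_3]$, and notes that either projector annihilates $\tensor{Z}$ because every nonzero index of $\tensor{Z}$ contains both a $2$ and a $3$. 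Your proof is easily patched the same way: if the tie goes to $\mat{e}_2$, the projector zeroes entries with a mode-$n$ index equal to $3$, and each nonzero of $\tensor{Z}$ has such a coordinate, so the conclusion is unchanged. With that one-sentence addition the argument is complete.
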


\begin{proof}
Using the formula for the optimal reconstructed tensor from fixed orthonormal factor matrices in \Cref{lem:orthonormal_columns},
\begin{align*}
  \widehat{\tensor{X}}_{\HOSVD(\mat{r})}
  &=
  \tensor{X} \times_{1} \parens{\mat{A}^{(1)} {\mat{A}^{(1)}}^\intercal}
  \times_{2} \parens{\mat{A}^{(2)} {\mat{A}^{(2)}}^\intercal}
  \times_{3}
  \dots
  \times_{N} \parens{\mat{A}^{(N)} {\mat{A}^{(N)}}^\intercal} \\
  &=
  \tensor{X} \times_{1} \mat{M}
  \times_{2} \mat{M}
  \times_{3}
  \dots
  \times_{N} \mat{M},
\end{align*}
where $\mat{M} = \mat{A}^{(1)} {\mat{A}^{(1)}}^\intercal$
since all of the factor matrices are equal by \Cref{fact:symmetry}.

We show that there are two possible values for $\mat{M}$ given our choice of $\mat{r} = (2,2,\dots,2)$:
\begin{itemize}

\item \textbf{Case 1 $(N=2)$.}
By \Cref{lem:left_singular_values},
the singular values of $\mat{X}_{(n)}$ are $(\sqrt{1 + \varepsilon}, 1, 1)$.
Since $\sigma_2 = \sigma_3$, we have
\begin{align*}
  \mat{A}^{(n)}
  \in
  \set*{
  \begin{bmatrix}
    1 & 0 \\
    0 & 1 \\
    0 & 0 \\
  \end{bmatrix},
  \begin{bmatrix}
    1 & 0 \\
    0 & 0 \\
    0 & 1 \\
  \end{bmatrix}
  }
  \implies
  \mat{M}
  \in
  \set*{
  \begin{bmatrix}
    1 & 0 & 0\\
    0 & 1 & 0\\
    0 & 0 & 0\\
  \end{bmatrix},
  \begin{bmatrix}
    1 & 0 & 0\\
    0 & 0 & 0\\
    0 & 0 & 1\\
  \end{bmatrix}
  }.
\end{align*}

\item \textbf{Case 2 $(N \ge 3)$.}
By \Cref{lem:left_singular_values},
the singular values of $\mat{X}_{(n)}$ are $(\sqrt{N-1}, \sqrt{1 + \varepsilon}, 1)$, so we have
\begin{align*}
  \mat{A}^{(n)}
  \in
  \set*{
  \begin{bmatrix}
    0 & 1 \\
    0 & 0 \\
    1 & 0 \\
  \end{bmatrix},
  \begin{bmatrix}
    1 & 0 \\
    0 & 0 \\
    0 & 1 \\
  \end{bmatrix}
  }
  \implies
  \mat{M}
  =
  \begin{bmatrix}
    1 & 0 & 0\\
    0 & 0 & 0\\
    0 & 0 & 1\\
  \end{bmatrix},
\end{align*}
which is covered by the $N=2$ case.
\end{itemize}

Next we use the unfolded interpretation of the $n$-mode product in \eqref{eqn:n_mode_product_unfolded} to prove that only the top component~$\tensor{Y}$ survives HOSVD reconstruction.
For any $n \in [N]$,
\begin{align*}
    \tensor{T} = \tensor{X} \times_{n} \mat{M}
    \iff
    \mat{T}_{(n)} = \mat{M} \mat{X}_{(n)}.
\end{align*}
Since $\mat{M} = \mat{I}_3 - \mat{e}_{i} \mat{e}_{i}^\intercal$ for $i=2$ or $i=3$,
the $n$-mode product with $\mat{M}$ eliminates the $i$-th row of $\mat{X}_{(n)}$.
Recalling that
\[
    \widehat{\tensor{X}}_{\HOSVD(\mat{r})}
    =
    \tensor{X} \times_{1} \mat{M} \times_{2} \mat{M} \times_{3} \dots \times_{N} \mat{M},
\]
the bottom component gets zeroed out since the index of each of its nonzeros
contains a $2$ and $3$.
Observe that we get the same result for either choice of $\mat{M}$.
In contrast, the top component $\tensor{Y}$ survives since we always have $m_{11} = 1$
and its nonzero value is at index $(1,1,\dots,1)$.

Putting everything together, we have
\begin{align*}
  \widehat{\tensor{X}}_{\HOSVD(\mat{r})} 
  =
  \tensor{Y}
  \implies
  \norm{\tensor{X} - \widehat{\tensor{X}}_{\HOSVD(\mat{r})}}_{\frobenius}^2
  =
  \norm{(\tensor{Y} + \tensor{Z}) - \tensor{Y}}_{\frobenius}^2
  =
  \norm{\tensor{Z}}_{\frobenius}^2
  =
  N,
\end{align*}
as desired.
\end{proof}

\subsection{Alternative Decomposition}

We now show that the bottom component can be \emph{perfectly reconstructed}
with a rank-$(2,2,\dots,2)$ Tucker decomposition.
Interestingly, this is the decomposition we obtain by applying HOSVD directly to $\tensor{Z}$,
i.e., $\widehat{\tensor{Z}}_{\HOSVD(\mat{r})} = \tensor{Z}$.

\begin{lemma}
\label{lem:alternative_decomposition}
For any $N \ge 2$, there is a rank-$(2,2,\dots,2)$ Tucker decomposition
that perfectly reconstructs the bottom component $\tensor{Z}$.
\end{lemma}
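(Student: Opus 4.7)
The plan is to exhibit explicit factor matrices that, together with the induced optimal core tensor from \Cref{lem:orthonormal_columns}, perfectly reconstruct $\tensor{Z}$. The crucial structural observation about $\tensor{Z}$ is that every nonzero entry is indexed by a tuple whose components all lie in $\{2,3\}$; the first coordinate $i_n = 1$ never appears in any nonzero. This immediately suggests choosing factor matrices that span exactly $\{\mat{e}_2,\mat{e}_3\}$ in each mode.

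Concretely, I would set
\[
    \mat{A}^{(n)} = \begin{bmatrix} \mat{e}_2 & \mat{e}_3 \end{bmatrix} = \begin{bmatrix} 0 & 0 \\ 1 & 0 \\ 0 & 1 \end{bmatrix} \in \R^{3 \times 2}
    \qquad \text{for all } n \in [N],
\]
which manifestly has orthonormal columns, so \Cref{lem:orthonormal_columns} applies. Next I would compute the projection $\mat{M} = \mat{A}^{(n)} {\mat{A}^{(n)}}^\intercal = \diag(0,1,1)$ and note that, via the unfolded identity \eqref{eqn:n_mode_product_unfolded}, the operation $\tensor{T} \mapsto \tensor{T} \times_n \mat{M}$ simply zeroes out the slice where $i_n = 1$ and leaves all other entries untouched. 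Since no nonzero of $\tensor{Z}$ has any index equal to $1$, applying this projection successively in all $N$ modes preserves $\tensor{Z}$ exactly.

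Invoking the reconstruction formula from \Cref{lem:orthonormal_columns}, the approximation corresponding to these factor matrices is
\[
    \widehat{\tensor{Z}} = \tensor{Z} \times_1 \mat{M} \times_2 \mat{M} \times_3 \cdots \times_N \mat{M} = \tensor{Z},
\]
so the reconstruction error is zero. The associated core tensor $\tensor{G} \in \R^{2 \times 2 \times \cdots \times 2}$ computed via \eqref{eqn:optimal_core_tensor} is just the relabeled restriction of $\tensor{Z}$ to indices $\{2,3\}^N$, which has a $1$ at each position where a single coordinate equals $1$ and the others equal $2$, and zeros elsewhere.

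There is essentially no obstacle here: the lemma reduces to verifying that a specific pair of diagonal projections commutes with the sparsity pattern of $\tensor{Z}$, and this is immediate from inspecting the support of $\tensor{Z}$. The only minor bookkeeping is tracking the index relabeling when writing down $\tensor{G}$ explicitly, but since the reconstruction identity in \Cref{lem:orthonormal_columns} is stated directly in terms of $\mat{A}^{(n)}{\mat{A}^{(n)}}^\intercal$, one does not need to compute $\tensor{G}$ to conclude.
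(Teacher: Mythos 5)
Your proposal is correct and matches the paper's proof essentially verbatim: both choose factor matrices with column span $\{\mat{e}_2,\mat{e}_3\}$ (the paper orders the columns as $[\mat{e}_3 \ \mat{e}_2]$, which is immaterial), form the projection $\mat{M} = \mat{I}_3 - \mat{e}_1\mat{e}_1^\intercal$, and conclude via the unfolded $n$-mode product that $\tensor{Z}$ is preserved because its support lies in $\{2,3\}^N$. No gaps.
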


\begin{proof}
For each $n \in [N]$, let the factor matrix be
\begin{equation}
\label{eqn:alternative_factor_matrix}
  \mat{A}^{(n)}
  =
  \begin{bmatrix}
    0 & 0 \\
    0 & 1 \\
    1 & 0 \\
  \end{bmatrix}.
\end{equation}
The optimal reconstructed tensor from fixed orthonormal factor matrices is
\begin{align*}
  \widehat{\tensor{Z}}_{\HOSVD(\mat{r})}
  &=
  \tensor{Z} \times_{1} \parens{\mat{A}^{(1)} {\mat{A}^{(1)}}^\intercal}
  \times_{2} \parens{\mat{A}^{(2)} {\mat{A}^{(2)}}^\intercal}
  \times_{3}
  \dots
  \times_{N} \parens{\mat{A}^{(N)} {\mat{A}^{(N)}}^\intercal} \\
  &=
  \tensor{Z} \times_{1} \mat{M}
  \times_{2} \mat{M}
  \times_{3}
  \dots
  \times_{N} \mat{M},
\end{align*}
by \Cref{lem:orthonormal_columns}, where
\[
  \mat{M}
  =
  \begin{bmatrix}
  0 & 0 & 0 \\
  0 & 1 & 0 \\
  0 & 0 & 1 
  \end{bmatrix}.
\]
Using the unfolding interpretation of the $n$-mode product
(like in the proof of \Cref{lem:hosvd_error}),
we have $\mat{M} = \mat{I}_{3} - \mat{e}_{1} \mat{e}_{1}^\intercal$,
so the only entries of $\tensor{Z}$ that survive reconstruction have indices in $\{2,3\}^N$,
which is precisely the bottom component.
\end{proof}

\begin{remark}
The factor matrices in \eqref{eqn:alternative_factor_matrix} contain the basis vectors $\{\mat{e}_2, \mat{e}_3\}$.
In contrast, if we run HOSVD on $(\tensor{X}, \mat{r})$,
it greedily (and suboptimally) chooses bases 
$\{\mat{e}_1, \mat{e}_2\}$ or $\{\mat{e}_{1}, \mat{e}_3\}$
because of the top component's singular value $\sqrt{1 + \varepsilon}$.
\end{remark}

\subsection{Proof of \Cref{thm:hosvd_lower_bound}}

We are ready to lower bound the approximation ratio of HOSVD.
Since our result holds for any $\varepsilon > 0$, this proves that the
$N$-approximation of \citet{de2000multilinear} is tight and cannot be improved.

\HosvdLowerBound*

\begin{proof}
This is a direct consequence of \Cref{lem:hosvd_error} and \Cref{lem:alternative_decomposition}.
For $\tensor{X} = \tensor{Y} + \tensor{Z}$ and $\mat{r} = (2,2,\dots,2)$,
we have
\begin{align*}
    \norm{\tensor{X} - \widehat{\tensor{X}}_{\HOSVD(\mat{r})}}_{\frobenius}^2
    =
    N
    \ge
    \frac{N}{1 + \varepsilon} \cdot L(\tensor{X}, \mat{r}),
\end{align*}
which proves the claim.
\end{proof}

\section{Extensions to Adaptive Methods}
\label{sec:extensions}

We now consider two \emph{adaptive} methods for Tucker decomposition that, in practice, improve on HOSVD.
We show that the worst-case approximation ratio of these methods is also $N/(1+\epsilon)$,
for any $\varepsilon > 0$.
To do this, we construct a more intricate adversarial tensor.

The methods we consider are \emph{sequentially truncated HOSVD} (ST-HOSVD) of \citet{vannieuwenhoven2012new}
the and \emph{higher-order orthogonal iteration} (HOOI) of \citet{de2000zest}.
HOOI is version of the alternating least squares (ALS) algorithm that forces the factor matrices be orthogonal. Interestingly, the ALS updates naturally produce orthogonal matrices.
Therefore, our construction shows that starting from an initial Tucker decomposition given by HOSVD or ST-HOSVD, ALS and HOOI do not improve the error.

\subsection{Advanced Construction}
\label{subsec:construction-2}

We start by presenting a more advanced tensor construction.

\begin{tcolorbox}
\begin{definition}[Advanced construction]
\label{def:construction-sthosvd}
For any $N \ge 3$ and $\varepsilon > 0$,
let the symmetric tensor $\tensor{X} \in \R^{4 \times 4 \times \dots \times 4}$
be the sum of three components, $\tensor{X} = \tensor{Y} + \tensor{Z} + \tensor{L}$,
defined as follows:
\begin{itemize}
    \item The \emph{top component} $\tensor{Y}$ has one nonzero value:
    \[
        y[i_1,i_2,\dots,i_N]
        =
        \begin{cases}
            \sqrt{1 + \varepsilon} & \text{if $(i_1, i_2, \dots, i_N) = (1,1,\dots,1)$,} \\
            0 & \text{otherwise.}
        \end{cases}
    \]
    \item The \emph{bottom component} $\tensor{Z}$ has $N$ nonzeros values:
    \[
        z[i_1,i_2,\dots,i_N]
        =
        \begin{cases}
            1 & \text{if $(i_1,i_2,\dots, i_N) = (4,3,\dots,3)$,} \\
            1 & \text{if $(i_1,i_2,\dots, i_N) = (3,4,\dots,3)$,} \\
            \vdots \\
            1 & \text{if $(i_1,i_2,\dots, i_N) = (3,3,\dots,4)$,} \\
            0 & \text{otherwise.}
        \end{cases}
  \]
  \item The \emph{middle component} $\tensor{L}$ has $N$ nonzeros values:
    \[
        \ell[i_1,i_2,\dots,i_N]
        =
        \begin{cases}
            \sqrt{1 + \varepsilon} & \text{if $(i_1,i_2,\dots, i_N) = (3,2,\dots,2)$,} \\
            \sqrt{1 + \varepsilon} & \text{if $(i_1,i_2,\dots, i_N) = (2,3,\dots,2)$,} \\
            \vdots \\
            \sqrt{1 + \varepsilon} & \text{if $(i_1,i_2,\dots, i_N) = (2,2,\dots,3)$,} \\
            0 & \text{otherwise.}
        \end{cases}
  \]
\end{itemize}
We use tensor $\tensor{X}$ with the target rank $\mat{r} = (3,3,\dots,3)$ to demonstrate the worst-case behavior of ST-HOSVD.
\end{definition}
\end{tcolorbox}

\paragraph{Example.}
For $N=3$, the slices of $\tensor{X}$ are
\begin{align*}
  \mat{X}_{1::}
  &=
  \begin{bmatrix}
    \sqrt{1 + \varepsilon} & 0 & 0 & 0 \\
    0 & 0 & 0 & 0 \\
    0 & 0 & 0 & 0 \\
    0 & 0 & 0 & 0
  \end{bmatrix}
  \quad
  \mat{X}_{2::}
  =
  \begin{bmatrix}
    0 & 0 & 0 & 0 \\
    0 & 0 & \sqrt{1 + \varepsilon} & 0 \\
    0 & \sqrt{1 + \varepsilon} & 0 & 0 \\
    0 & 0 & 0 & 0
  \end{bmatrix}
  \\
  \mat{X}_{3::}
  & =
  \begin{bmatrix}
    0 & 0 & 0 & 0\\
    0 & \sqrt{1 + \varepsilon} & 0 & 0 \\
    0 & 0 & 0 & 1 \\
    0 & 0 & 1 & 0
  \end{bmatrix}
  \quad
  \mat{X}_{4::}
  =
  \begin{bmatrix}
    0 & 0 & 0 & 0\\
    0 & 0 & 0 & 0 \\
    0 & 0 & 1 & 0 \\
    0 & 0 & 0 & 0
  \end{bmatrix}.
\end{align*}

\subsection{ST-HOSVD}
\label{subsec:st-hosvd}

The main difference between HOSVD and ST-HOSVD is that after computing the factor matrix for each mode $n$, we project the tensor onto that factor matrix (see \Cref{alg:st-hosvd}).
Consequently, unlike HOSVD, the decomposition computed by ST-HOSVD depends on the order in which the modes are processed. For symmetric tensors, however, such as our construction in \Cref{def:construction-sthosvd},
permuting the order of the modes only permutes the factor matrices,
which does not affect the reconstruction error.
Therefore, we restrict attention to the order $1,2,\dots,N$ for the modes.

\begin{algorithm}[H]
\caption{ST-HOSVD}
\label{alg:st-hosvd}
\textbf{Input:} tensor $\tensor{X} \in \R^{I_1 \times I_2 \times \cdots \times I_N}$,
 rank $\mat{r} = (R_1,R_2,\dots,R_N)$

\begin{algorithmic}[1]
    \State $\tensor{X}^{(1)} \gets \tensor{X}$
    \For{$n=1$ to $N$}
        \State $\mat{A}^{(n)} \gets R_n \text{ top left singular vectors of } \mat{X}^{(n)}_{(n)}$
        \State $\tensor{X}^{(n+1)} \gets \tensor{X}^{(1)} \times_{n} {\mat{A}^{(n)}}^\intercal$
    \EndFor
    \State $\tensor{G} \gets \tensor{X} \times_{1} {\mat{A}^{(1)}}^\intercal \times_{2} {\mat{A}^{(2)}}^\intercal \times_{3} \dots \times_{N} {\mat{A}^{(N)}}^\intercal$
    \State \textbf{return}
      $\tensor{G}, \mat{A}^{(1)}, \mat{A}^{(2)}, \dots, \mat{A}^{(N)}$ 
\end{algorithmic}
\end{algorithm}

Proving a lower bound for ST-HOSVD is more challenging because the tensor $\tensor{X}^{(n)}$
changes in each iteration. This necessitates the more intricate construction in \Cref{def:construction-sthosvd}. Nevertheless, the overall strategy remains the same: we can trick the algorithm into selecting the wrong singular vectors due to its greedy nature.

\SthosvdLowerBound*
\begin{proof}
We use the tensor constructed in \Cref{def:construction-sthosvd} with $\mat{r} = (3,3,\dots,3)$.
For the original tensor, we have
\begin{align*}
    \mat{X}^{(1)}_{(1)} {\mat{X}^{(1)}_{(1)}}^\intercal
    =
    \begin{bmatrix}
        1 + \varepsilon & 0 & 0 & 0 \\
        0 & (N-1)(1+\varepsilon) & 0 & 0 \\
        0 & 0 & N +\varepsilon & 0 \\
        0 & 0 & 0 & 1
    \end{bmatrix}.
\end{align*}
Therefore,
the matrix $\mat{X}^{(1)}_{(1)}$ has the following left singular vectors and singular values:
\begin{itemize}
  \item $(\mat{e}_1, \sqrt{1+\varepsilon})$,
  \item $(\mat{e}_2, \sqrt{(N-1)(1+\varepsilon)})$,
  \item $(\mat{e}_3, \sqrt{N +\varepsilon})$,
  \item $(\mat{e}_4, 1)$.
\end{itemize}
Therefore, ST-HOSVD selects $\mat{A}^{(1)} = \begin{bmatrix}
    \mat{e}_1 & \mat{e}_2 & \mat{e}_3
\end{bmatrix}$. Then $\tensor{X}^{(2)} \in \R^{3 \times 4 \times 4 \times \cdots \times 4}$ is the tensor obtained from $\tensor{X}^{(1)}$ by removing the slice $(4,:,:,\ldots,:)$,
which includes only one nonzero entry.
One can then observe that
\begin{align*}
    \mat{X}_{(2)}^{(2)} (\mat{X}_{(2)}^{(2)})^\intercal
    =
    \begin{bmatrix}
        1 + \varepsilon & 0 & 0 & 0 \\
        0 & (N-1)(1+\varepsilon) & 0 & 0 \\
        0 & 0 & N - 1 + \varepsilon & 0 \\
        0 & 0 & 0 & 1
    \end{bmatrix}.
\end{align*}
Therefore, ST-HOSVD also selects $\mat{A}^{(2)} = \begin{bmatrix}
    \mat{e}_1 & \mat{e}_2 & \mat{e}_3
\end{bmatrix}$ and consequently,
% Then the truncation removes the nonzero entry $(3,4,\dots,3)$. Then we have
\begin{align*}
    \mat{X}_{(3)}^{(3)} (\mat{X}_{(3)}^{(3)})^\intercal
    =
    \begin{bmatrix}
        1 + \varepsilon & 0 & 0 & 0 \\
        0 & (N-1)(1+\varepsilon) & 0 & 0 \\
        0 & 0 & N - 2 +\varepsilon & 0 \\
        0 & 0 & 0 & 1
    \end{bmatrix}.
\end{align*}
Continuing this inductively, we arrive at
\begin{align}
\label{eq:xn-gram}
    \mat{X}_{(N)}^{(N)} (\mat{X}_{(N)}^{(N)})^\intercal
    =
    \begin{bmatrix}
        1 + \varepsilon & 0 & 0 & 0 \\
        0 & (N-1)(1+\varepsilon) & 0 & 0 \\
        0 & 0 & 1+\varepsilon & 0 \\
        0 & 0 & 0 & 1
    \end{bmatrix}.
\end{align}
The order that we select the dimensions does not matter due to symmetry of the example and the above argument.
Therefore, ST-HOSVD chooses
$\begin{bmatrix}
    \mat{e}_1 & \mat{e}_2 & \mat{e}_3
\end{bmatrix}$ for all factor matrices, which is equivalent to selecting the subtensor corresponding to $(1:3,1:3,\cdots, 1:3)$.
It follows that
\[
    \tensor{X} - \widehat{\tensor{X}}_{\STHOSVD(\mat{r})} = \tensor{Z},
\]
where $\tensor{Z}$ is as defined in \Cref{def:construction-sthosvd}.
Therefore,
\[
\norm{\tensor{X} - \widehat{\tensor{X}}_{\STHOSVD(\mat{r})}}_{\frobenius}^2 = N.
\]
However, taking $\tensor{X}^*$ to be the low-rank tensor corresponding to all factor matrices equal to
$\begin{bmatrix}
    \mat{e}_2 & \mat{e}_3 & \mat{e}_4
\end{bmatrix}$,
which is equivalent to selecting the subtensor corresponding to $(2:4,2:4,\cdots, 2:4)$,
gives the following reconstruction error.
\[
    \norm{\tensor{X} - \tensor{X}^*}_{\frobenius}^2 = 1+\epsilon.
\]
The proof of the claim follows.
\end{proof}

\subsection{HOOI}
\label{subsec:hooi}

The HOOI algorithm is first introduced in \citet{de2000zest}.
However, we use a simpler description due to \citet{kolda2009tensor} presented in \Cref{alg:hooi}. The HOOI algorithm is an ALS algorithm that ensures the factor matrices have orthonormal columns.
Denoting $\tensor{G} = \tensor{X} \times_1 \mat{A}^{(1)} \times_2 \cdots \times_N \mat{A}^{(N)}$, for the current values of the factor matrices,
HOOI is equivalent to solving the following linear regression problem in iteration $n$ of the inner loop:
\[
\min_{\substack{\mat{A}^{(n)} \in \R^{I_n \times R_n}:\\ \mat{A}^{(n)} \text{ has orthonormal columns}}} \norm{\tensor{X} - \tensor{G} \times_1 \mat{A}^{(1)} \times_2 \mat{A}^{(2)} \times_3 \dots \times_N \mat{A}^{(N)}}_F^2,
\]
which without the orthogonality condition is a ordinary linear regression problem.

\begin{algorithm}[H]
\caption{HOOI}
\label{alg:hooi}
\textbf{Input:} tensor $\tensor{X} \in \R^{I_1 \times I_2 \times \cdots \times I_N}$,
 rank $\mat{r} = (R_1,R_2,\dots,R_N)$

\begin{algorithmic}[1]
    \State Initialize $\mat{A}^{(n)} \in \R^{I_n \times R_n}$ for $n \in [N]$ using HOSVD
    \Repeat
    \For{$n=1$ to $N$}
        \State $\tensor{B} \gets \tensor{X} \times_{1} \mat{A}^{(1)}\times_{2} \cdots \times_{n-1} \mat{A}^{(n-1)} \times_{n+1} \mat{A}^{(n+1)} \times_{n+2} \cdots \times_{N} \mat{A}^{(N)}$
        \State $\mat{A}^{(n)} \gets R_n \text{ top left singular vectors of } \mat{B}_{(n)}$
    \EndFor
    \Until{converged or iteration limit}
    \State $\tensor{G} \gets \tensor{X} \times_{1} {\mat{A}^{(1)}}^\intercal \times_{2} {\mat{A}^{(2)}}^\intercal \times_{3} \dots \times_{N} {\mat{A}^{(N)}}^\intercal$
    \State \textbf{return}
      $\tensor{G}, \mat{A}^{(1)}, \mat{A}^{(2)}, \dots, \mat{A}^{(N)}$ 
\end{algorithmic}
\end{algorithm}

\HooiLowerBound*
\begin{proof}
Since the HOOI algorithm initializes the factor matrices with HOSVD, initially
\[
\mat{A}^{(1)} = \cdots = \mat{A}^{(N)} = \begin{bmatrix}
    \mat{e}_1 & \mat{e}_2 & \mat{e}_3
\end{bmatrix}.
\]
We show by induction that the factor matrices do not change over the iterations of HOOI. Consider $n\in[N]$. By the induction hypothesis, all factor matrices are equal to
$\begin{bmatrix}
    \mat{e}_1 & \mat{e}_2 & \mat{e}_3
\end{bmatrix}$.
Therefore, for 
\[
\tensor{B} = \tensor{X} \times_{1} \mat{A}^{(1)}\times_{2} \cdots \times_{n-1} \mat{A}^{(n-1)} \times_{n+1} \mat{A}^{(n+1)} \times_{n+2} \cdots \times_{N} \mat{A}^{(N)},
\]
we have that $\mat{B}_{(n)}$ is equal to $\mat{X}^{(N)}_{(N)}$ that is produced in the ST-HOSVD algorithm.
Therefore, by \eqref{eq:xn-gram},
\[
\mat{B}_{(n)}{\mat{B}_{(n)}}^\intercal = \mat{X}_{(N)}^{(N)} (\mat{X}_{(N)}^{(N)})^\intercal
    =
    \begin{bmatrix}
        1 + \varepsilon & 0 & 0 & 0 \\
        0 & (N-1)(1+\varepsilon) & 0 & 0 \\
        0 & 0 & 1+\varepsilon & 0 \\
        0 & 0 & 0 & 1
    \end{bmatrix}.
\]
It follows that the updated $\mat{A}^{(n)}$ is also equal to $\begin{bmatrix}
    \mat{e}_1 & \mat{e}_2 & \mat{e}_3
\end{bmatrix}$, which concludes the proof.
\end{proof}

\section{Conclusion}

We prove that the classic approximation guarantee for the HOSVD algorithm in \citet{de2000multilinear} is tight.
To do this, we construct a tensor that exploits the greedy nature of HOSVD and forces it to make independent myopic choices in each mode,
resulting in a provable lower bound for the approximation factor of $N/(1+\varepsilon)$, for any $\varepsilon>0$.
We then extend this idea with a more intricate construction to show that the approximation ratio for ST-HOSVD in \citet{vannieuwenhoven2012new}
and HOOI in \citet{de2000zest} are at least $N / (1 + \varepsilon)$, for any $\varepsilon > 0$.
To the best of our knowledge, these are the first results showing that the approximation guarantees of HOSVD, ST-HOSVD, and HOOI match their upper bounds and cannot be improved.

We conclude with an open problem.
As we have demonstrated,
existing low-rank Tucker decomposition algorithms can output solutions with an approximation ratio as poor as $N$.
Thus, a natural open problem is to
give a polynomial-time algorithm (in the size of the tensor $I_1 I_2 \cdots I_N)$
with an approximation ratio strictly less than $N$,
or show that the approximation ratio cannot be improved in polynomial time under standard complexity-theoretic assumptions.

\bibliographystyle{plainnat}
\bibliography{references}

\newpage
\appendix

\section{Appendix}
\subsection{Proof of \Cref{lem:orthonormal_columns}}
\label{app:orthonormal_columns}

\OrthonormalColumns*

\begin{proof}
We first show that, without loss of generality, each factor matrix $\mat{A}^{(n)}$ can have orthonormal columns.
For any $n \in [N]$, let $\mat{A}^{(n)} = \mat{Q} \mat{R}$ be a thin QR decomposition
where $\mat{Q} \in \R^{I_n \times R_n}$ and $\mat{R} \in \R^{R_n \times R_n}$.
The mode-$n$ unfolding of Tucker decomposition
$\tensor{Y} = \tensor{G} \times_1 \mat{A}^{(1)} \times_2 \mat{A}^{(2)} \times_3 \dots \times_N \mat{A}^{(N)}$
can be written as
\begin{align}
\label{eqn:tucker_unfolding_formula}
    \mat{Y}_{(n)}
    &=
    \mat{A}^{(n)} \mat{G}_{(n)}
    \parens{
    \mat{A}^{(1)} \otimes \dots \otimes \mat{A}^{(n-1)} \otimes \mat{A}^{(n+1)} \otimes \dots \otimes \mat{A}^{(N)}
    }^\intercal \\
    &=
    \mat{Q} \parens{\mat{R} \mat{G}_{(n)}}
    \parens{
    \mat{A}^{(1)} \otimes \dots \otimes \mat{A}^{(n-1)} \otimes \mat{A}^{(n+1)} \otimes \dots \otimes \mat{A}^{(N)}
    }^\intercal, \notag
\end{align}
where $\otimes$ is the Kronecker product.
This demonstrates how matrix $\mat{R}$ can be absorbed into the core tensor.

We now assume that all the factor matrices $\mat{A}^{(n)}$ have orthonormal columns and are fixed.
To solve for the optimal core tensor $\tensor{G}$,
we follow \citep[Section 4.2]{kolda2009tensor}
and write the objective function in vectorized form:
\[
  \norm{
    \text{vec}(\tensor{X})
    -
    (\mat{A}^{(1)} \otimes \mat{A}^{(2)} \otimes \dots \otimes \mat{A}^{(N)})\text{vec}(\tensor{G})}_{2}^{2},
\]
where $\text{vec}(\tensor{X}) \in \R^{I_1 I_2 \cdots I_N}$ is the vector that vertically stacks the entries of $\tensor{X}$, ordered lexicographically by index.
This is an ordinary least-squares linear regression problem~\citep{fahrbach2022subquadratic}, so it follows from properties of the Kronecker product, pseudoinverse, and columnwise orthonormal matrices that
\begin{align*}
  \text{vec}(\tensor{G})
  &=
  (\mat{A}^{(1)} \otimes \mat{A}^{(2)} \otimes \dots \otimes \mat{A}^{(N)})^{+} \text{vec}(\tensor{X}) \\
  &=
  ({\mat{A}^{(1)}}^{+} \otimes {\mat{A}^{(2)}}^{+} \otimes \dots \otimes {\mat{A}^{(N)}}^{+}) \text{vec}(\tensor{X}) \\
  &=
  ({\mat{A}^{(1)}}^\intercal \otimes {\mat{A}^{(2)}}^\intercal \otimes \dots \otimes {\mat{A}^{(N)}}^\intercal) \text{vec}(\tensor{X}),
\end{align*}
which further implies that
$
  \tensor{G}
    = 
    \tensor{X} \times_{1} {\mat{A}^{(1)}}^\intercal \times_{2} {\mat{A}^{(2)}}^\intercal \times_{3} \dots \times_{N} {\mat{A}^{(N)}}^\intercal.
$
Applying this formula for $\tensor{G}$ and the $n$-mode product properties in \eqref{eqn:n_mode_product_same} to the definition of Tucker decomposition completes the proof.
\end{proof}

\subsection{Difference Between SVD and HOSVD for $N=2$}
\label{app:svd}

In this section, we explore the difference between the \emph{singular value decomposition} (SVD) and HOSVD for $N = 2$ in detail.
Recall that our construction in \Cref{sec:lower_bound} is
\begin{align*}
    \mat{X}
    =
    \mat{X}_{(1)}
    =
    \mat{X}_{(2)}
    =
    \begin{bmatrix}
        \sqrt{1 + \varepsilon} & 0 & 0 \\
        0 & 0 & 1 \\
        0 & 1 & 0
    \end{bmatrix}.
\end{align*}
The SVD of $\mat{X} = \mat{U} \mat{\Sigma} \mat{V}^\intercal$ can be written as:
\begin{align*}
    \mat{U}
    &=
    \begin{bmatrix}
        1 & 0 & 0 \\
        0 & 1 & 0 \\
        0 & 0 & 1 \\
    \end{bmatrix} ,\\
    \mat{\Sigma}
    &=
    \begin{bmatrix}
        \sqrt{1 + \varepsilon} & 0 &  0 \\
        0 & 1 & 0 \\
        0 & 0 & 1 \\
    \end{bmatrix}, \\
    \mat{V}
    &=
    \begin{bmatrix}
        1 & 0 & 0 \\
        0 & 0 & 1 \\
        0 & 1 & 0 \\
    \end{bmatrix}.
\end{align*}
Note that the SVD is not unique since $\sigma_2 = \sigma_3$, which we can see by writing
$\mat{X} = \mat{X}^\intercal = \mat{V} \mat{\Sigma} \mat{U}^\intercal$.

\paragraph{HOSVD.}
Since $\mat{X}$ is symmetric and $\mat{r} = (2,2)$,
\Cref{fact:symmetry} implies that
\begin{align*}
    \mat{A}^{(1)}
    =
    \mat{A}^{(2)}
    =
    \begin{bmatrix}
        1 & 0 \\
        0 & 1 \\
        0 & 0 \\
    \end{bmatrix}
    \text{~~or~~}
    \mat{A}^{(1)}
    =
    \mat{A}^{(2)}
    =
    \begin{bmatrix}
        1 & 0 \\
        0 & 0 \\
        0 & 1 \\
    \end{bmatrix}.
\end{align*}
Applying the mode-$n$ unfolding formula for Tucker decompositions in~\eqref{eqn:tucker_unfolding_formula}
to the the optimal core tensor in \Cref{lem:orthonormal_columns}
gives us
\begin{align*}
    \mat{G}
    =
    {\mat{A}^{(2)}}^\intercal \mat{X} {\mat{A}^{(1)}}
    =
    \begin{bmatrix}
        \sqrt{1 + \varepsilon} & 0 \\
        0 & 0
    \end{bmatrix},
\end{align*}
in both cases.
It follows that the HOSVD reconstruction is always the top component (\Cref{lem:hosvd_error}):
\begin{align*}
    \widehat{\mat{X}}_{\HOSVD(\mat{r})}
    &=
    \mat{A}^{(2)} \mat{G} {\mat{A}^{(1)}}^\intercal
    =
    \begin{bmatrix}
        \sqrt{1 + \varepsilon} & 0 & 0 \\
        0 & 0 & 0 \\
        0 & 0 & 0
    \end{bmatrix}.
\end{align*}
This gives a reconstruction error of $\norm{\widehat{\mat{X}}_{\HOSVD(\mat{r})} - \mat{X}}_{\frobenius}^2 = 2$.

\paragraph{SVD.}
If we take the rank-2 truncated SVD of $\mat{X}$ as the Tucker decomposition,
then depending on our choice of $\mat{U}$,
\begin{align*}
    \widehat{\mat{X}}_{\text{SVD}(2)}
    &=
    \begin{bmatrix}
        1 & 0 \\
        0 & 1 \\
        0 & 0 \\
    \end{bmatrix}
    \begin{bmatrix}
        \sqrt{1 + \varepsilon} & 0 \\
        0 & 1 \\
    \end{bmatrix}
    \begin{bmatrix}
        1 & 0 \\
        0 & 0 \\
        0 & 1 \\
    \end{bmatrix}^\intercal
    =
    \begin{bmatrix}
        \sqrt{1 + \varepsilon} & 0 & 0 \\
        0 & 0 & 1 \\
        0 & 0 & 0
    \end{bmatrix}
\end{align*}
or
\begin{align*}
    \widehat{\mat{X}}_{\text{SVD}(2)}
    &=
    \begin{bmatrix}
        1 & 0 \\
        0 & 0 \\
        0 & 1 \\
    \end{bmatrix}
    \begin{bmatrix}
        \sqrt{1 + \varepsilon} & 0 \\
        0 & 1 \\
    \end{bmatrix}
    \begin{bmatrix}
        1 & 0 \\
        0 & 1 \\
        0 & 0 \\
    \end{bmatrix}^\intercal
    =
    \begin{bmatrix}
        \sqrt{1 + \varepsilon} & 0 & 0 \\
        0 & 0 & 0 \\
        0 & 1 & 0
    \end{bmatrix}.
\end{align*}
The reconstruction error is $\norm{\widehat{\mat{X}}_{\text{SVD}(2)} - \mat{X}}_{\frobenius}^2 = 1$ in both cases, so the approximation factor in \Cref{thm:hosvd} is tight.

The key difference between SVD and HOSVD is the \emph{asymmetry} in the top-2 left and right singular vectors $\mat{U}[:,:2]$ and $\mat{V}[:,:2]$.
This motivated our construction of the adversarial symmetric tensor that exploits the greedy behavior of HOSVD in each step.

\end{document}